\documentclass[pdflatex,sn-mathphys,Numbered]{sn-jnl}

\usepackage{graphicx}%
\usepackage{multirow}%
\usepackage{amsmath,amssymb,amsfonts}%
\usepackage{amsthm}%
\usepackage{mathtools}%
\usepackage{bm}%
\usepackage{mathrsfs}%
\usepackage[title]{appendix}%
\usepackage{xcolor}%
\usepackage{textcomp}%
\usepackage{manyfoot}%
\usepackage{booktabs}%
\usepackage{enumitem}%
\usepackage{tikz}%
\usetikzlibrary{arrows.meta,positioning,fit,backgrounds,calc}%
\usepackage[nameinlink,capitalise]{cleveref}%

\AtBeginDocument{}
\makeatletter
\g@addto@macro\UrlBreaks{\do\_\do\/\do\-\do\.\do\=\do\?}
\makeatother
\newcommand{\N}{\mathcal{N}}
\newcommand{\Part}{\mathfrak{P}}
\newcommand{\E}{\mathbb{E}}
\newcommand{\R}{\mathbb{R}}
\newcommand{\btheta}{\bm{\theta}}
\newcommand{\calD}{\mathcal{D}}
\newcommand{\calV}{\mathcal{V}}
\newcommand{\PiNS}{\Pi^{\mathrm{NS}}}

\newtheorem{theorem}{Theorem}[section]
\newtheorem{proposition}[theorem]{Proposition}
\newtheorem{corollary}[theorem]{Corollary}

\theoremstyle{definition}
\newtheorem{definition}[theorem]{Definition}
\newtheorem{assumption}[theorem]{Assumption}
\theoremstyle{remark}
\newtheorem{remark}[theorem]{Remark}

\begin{document}

\title[Stable, Budget-Feasible Coalitions for Clustered FL]{Stable and Budget-Feasible Coalition Formation for Clustered Federated Learning: A Hedonic Potential-Game Approach}

\author[1]{\fnm{Cengis} \sur{Hasan}}\email{cengis.hasan@cognifinity.lu}

\affil[1]{\orgname{Cognifinity}, \orgaddress{\country{Luxembourg}}}

\abstract{Federated learning systems can benefit from organizing heterogeneous participants into coalitions that train coalition-specific models. Such clustering is sustainable only if participants prefer to remain in their assigned coalition and the associated transfers are affordable. We develop a transferable-surplus model that separates a coalition's learning benefit, system cost, participant cost, and monetary transfers. An allocation rule converts coalition surplus into hedonic preferences, while weak budget feasibility guarantees nonnegative retained coordinator surplus. For symmetric pairwise surplus allocations, we prove that the induced coalition-formation game is an exact potential game. Consequently, a Nash-stable partition exists, and every strict better-response process terminates at such a partition. When destination members may reject entrants, accepted better responses instead terminate at an individually stable partition. We characterize feasibility of bounded pair incentives and give a polynomial verification result when retained budget slack is submodular. We then decompose welfare into participant potential and coordinator-retained slack and derive additive and relative-slack multiplicative efficiency guarantees. Exact budget balance yields a welfare-optimal Nash-stable partition on the exactly pairwise-representable surplus class, whereas budget feasibility alone permits unbounded welfare loss. Finally, global potential maximization is weighted maximum-agreement correlation clustering up to a constant. Approximation followed by strict better-response stabilization preserves the agreement guarantee and yields an explicit end-to-end welfare bound depending on retained slack and negative-edge mass. Explicit constructions show that the relative-slack guarantee is asymptotically tight and that the negative-edge correction can be attained. In a preregistered five-seed CIFAR-10 study, the proposed mechanism reaches the certified estimated-table welfare optimum on every primary instance, while equal-surplus sharing has no Nash-stable outcome on three instances; pairwise validation gain also gives substantially more reliable pair signs than gradient alignment. These results connect learning value, monetary transfers, stability, and economic efficiency without conflating local equilibrium, global optimality, and computational tractability.}

\keywords{clustered federated learning, coalition formation, hedonic games, Nash stability, individual stability, exact potential games, budget feasibility}

\pacs[MSC Classification]{91A12, 91A80, 68T42, 91B32}

\maketitle

\section{Introduction}
Federated learning allows multiple participants to train models without pooling their raw data at a central location \cite{McMahanEtAl2017FedAvg,LiEtAl2020Survey,KairouzEtAl2021OpenProblems}. Its practical performance nevertheless depends on which participants collaborate. Differences in data distributions, sample sizes, communication reliability, and participation costs can make a single grand coalition undesirable. Clustered or personalized federated learning addresses the statistical side of this problem by allowing groups of compatible participants to train different models. The economic side remains equally important: participants must be compensated, the coordinating entity must be able to finance the compensation, and no participant should prefer an available unilateral move to another coalition.

This paper studies the joint economic and strategic foundations of coalition formation in clustered federated learning. A coalition trains a coalition-specific model and produces an expected learning benefit. After system and participation costs are deducted, the remaining transferable surplus can be divided among the participants and the coordinating entity. The participant shares induce preferences over coalitions. We ask whether a designer can select an affordable allocation rule that admits a stable coalition partition and whether such a partition can be reached through decentralized decisions.

The problem is naturally modeled as a hedonic coalition-formation game, in which each participant's preference depends on the membership of its own coalition \cite{BogomolnaiaJackson2002,Hajdukova2006,AzizBrandl2012}. We use Nash stability as a strong baseline: no participant can improve its utility by leaving its coalition and joining another existing coalition, including the empty coalition that represents operating alone. Because Nash stability gives the destination no veto, we also analyze individual stability, under which every destination member must weakly accept an entrant.

The paper makes seven contributions.
\begin{enumerate}[leftmargin=*]
    \item It introduces a coalition-specific federated learning model that handles unsuccessful communication rounds without an undefined aggregation and does not assume that merging coalitions always lowers loss.
    \item It separates learning benefit, coordinator cost, participant cost, transfers, participant utility, and retained coordinator surplus. This separation yields explicit budget-feasibility and individual-rationality conditions.
    \item It characterizes bounded symmetric pairwise allocations, proves an exact potential representation, and establishes existence and finite convergence results for both Nash stability and destination-consent-based individual stability.
    \item It links stability to efficiency through coordinator-retained budget slack, giving additive and relative-slack multiplicative guarantees. The relative-slack bound is asymptotically tight. Exact balance gives a welfare-optimal stable partition on the exactly pairwise-representable class, while budget feasibility alone gives no finite welfare guarantee.
    \item It shows that exponentially many budget constraints can be verified in polynomial oracle time when the retained-slack set function is submodular, including a sufficient case with submodular surplus and nonnegative pair rewards.
    \item It identifies global potential maximization with weighted maximum-agreement correlation clustering and derives an end-to-end welfare guarantee for approximation followed by stability post-processing, with an instance attaining its negative-edge correction.
    \item It evaluates the complete design on heterogeneous CIFAR-10 instances. At the primary calibration, the decentralized mechanism reaches the certified estimated-table welfare optimum for all five seeds, whereas equal-surplus sharing has an empty Nash-stable set for three seeds. The experiments also quantify stability costs, retained slack, negative-edge mass, estimator sign errors, and uncertainty in the estimated coalition values.
\end{enumerate}

The present formulation corrects three limitations of the preliminary conference version \cite{Hasan2021HedonicFL}.\footnote{A preliminary version of this work appeared at the Workshop on Optimization and Learning in Multi-Agent Systems (OptLearnMAS) at AAMAS~2021.} First, coalition membership now affects the trained model because each coalition trains its own model. Second, superadditivity of total coalition surplus is not treated as sufficient for individual preferences over larger coalitions. Third, a Nash equilibrium is correctly interpreted as a unilateral local optimum of the potential rather than necessarily a global optimum.

\section{Related Work}
Research on incentives for federated learning has used auctions, contracts, Stackelberg games, reinforcement learning, and privacy-aware mechanisms \cite{LeEtAl2021Auction,DingEtAl2020PrivateInfo,SarikayaErcetin2020,ZhanZhang2020DRL,Kim2020IncentiveDP}. An economic and game-theoretic survey is given in \cite{TuEtAl2022Incentive}. Recent cross-silo work jointly considers data diversity, truthful participant selection, and a limited reward budget \cite{WuEtAl2025Diversity}. It selects participants for one training task rather than forming multiple endogenous, stable coalitions. Model-sharing games study voluntary participation when participants may prefer models trained by selected groups \cite{DonahueKleinberg2021ModelSharing}. Game-theoretic analyses of the collaboration itself study equilibria of sampling contributions \cite{BlumEtAl2021Collaboration}, core-stable aggregation for fairness across heterogeneous data owners \cite{ChaudhuryEtAl2022Core}, and collaboration-structure formation among self-interested and possibly competing silos \cite{ChenEtAl2024FedEgoists}. These works allocate model quality or sampling burden; they do not design monetary transfers under a coordinator budget. Budget-feasible mechanism design in the procurement-auction sense bounds payments to strategic sellers by a hard budget \cite{Singer2010BudgetFeasible}; weak budget feasibility in this paper instead requires nonnegative coordinator-retained surplus for a posted allocation rule under complete information.

The coalition-specific model belongs to clustered federated learning. IFCA alternates participant-cluster assignment and cluster-model training \cite{GhoshEtAl2020IFCA}, while clustered federated learning based on loss-surface geometry groups participants with jointly trainable distributions \cite{SattlerEtAl2021CFL}. These methods address statistical clustering but do not by themselves determine monetary transfers or strategic participation.

Our analysis builds on hedonic coalition formation, especially existence and stability results for additively separable preferences \cite{BogomolnaiaJackson2002,Hajdukova2006,AzizBrandl2012,AzizSavani2016HedonicChapter}. Convergence of decentralized deviation dynamics to individually stable partitions has been characterized across many hedonic-game classes \cite{BrandtBullingerWilczynski2023}, and Nash stability and efficiency have been analyzed in depth for fractional hedonic games \cite{BiloEtAl2018FHG}; in our game, both properties follow from a single exact potential induced by the transfer design. Donahue and Kleinberg study optimal and stable FL coalition arrangements under an error-based hedonic model, give an efficient optimal-partition algorithm for their structure, and prove a price-of-anarchy bound of nine \cite{DonahueKleinberg2021Optimality}. Hedonic formulations have also been deployed operationally in federated systems, including reputation-aware coalition formation for serverless hierarchical FL \cite{NgEtAl2022ReputationHedonic}, a dual-level coalition-auction participation game \cite{ChenEtAl2025DualGFL}, and coalition formation with contract-based rewards in aerial IoT federations \cite{LiuEtAl2025AAV}. These designs adopt stability notions and algorithms tailored to their target architectures. None of them characterizes the allocation-rule class that guarantees stable partitions under a coordinator budget, and none connects retained budget slack to certified welfare bounds. Our differentiator is therefore not hedonic FL stability by itself. It is the coupling of participant preferences to affordable monetary transfers and the resulting relationship among participant utility, coordinator-retained surplus, and total welfare.

The pairwise potential objective is closely related to correlation clustering \cite{BansalBlumChawla2004,Swamy2004}. We use this relationship to state the computational limitation of global potential optimization and to transfer approximation guarantees to stable post-processed partitions.

\section{System Model}
\subsection{Participants and coalition-specific objectives}
Let $\N=\{1,\ldots,n\}$ denote the finite set of participants. Participant $i$ owns a local data set $\calD_i$ of size $m_i$ and has empirical loss
\begin{equation}
    F_i(\btheta)=\frac{1}{m_i}\sum_{z\in\calD_i}\ell(\btheta;z).
\end{equation}
A coalition $S\subseteq\N$, $S\neq\emptyset$, trains a coalition-specific model. Its weighted training objective is
\begin{equation}
    F_S(\btheta)=\sum_{i\in S}\alpha_i^S F_i(\btheta),
    \qquad
    \alpha_i^S=\frac{m_i}{\sum_{j\in S}m_j}.
    \label{eq:coalition-objective}
\end{equation}
A partition $\Pi\in\Part(\N)$ is a collection of nonempty, disjoint coalitions whose union is $\N$. Each $S\in\Pi$ trains and uses its own model $\btheta_S$. This interpretation makes the statistical outcome depend directly on the partition. \Cref{fig:coalition-partition} illustrates the resulting coalition-specific structure.

\begin{figure}[t]
\centering
\begin{tikzpicture}[font=\small,>={Stealth[length=2.2mm]},
  part/.style={circle,draw,minimum size=7mm,inner sep=1pt,fill=gray!12},
  model/.style={draw,rounded corners,minimum width=17mm,minimum height=9mm,fill=gray!20,align=center},
  val/.style={align=left}]
\node[part] (a1) at (0,0)     {$1$};
\node[part] (a2) at (0,-0.85) {$2$};
\node[part] (a3) at (0,-1.7)  {$3$};
\node[part] (b1) at (0,-3.05) {$4$};
\node[part] (b2) at (0,-3.9)  {$5$};
\begin{scope}[on background layer]
\node[draw,dashed,rounded corners,fill=black!4,fit=(a1)(a3),inner sep=3.2mm,
      label={[font=\small]above:{coalition $S_a=\{1,2,3\}$}}] (SA) {};
\node[draw,dashed,rounded corners,fill=black!8,fit=(b1)(b2),inner sep=3.2mm,
      label={[font=\small]below:{coalition $S_b=\{4,5\}$}}] (SB) {};
\end{scope}
\node[model] (mA) at (4,-0.85) {model $\btheta_{S_a}$};
\node[model] (mB) at (4,-3.475){model $\btheta_{S_b}$};
\node[val] (oA) at (6.6,-0.85) {quality $Q_{S_a}$\\ benefit $B(S_a)$};
\node[val] (oB) at (6.6,-3.475){quality $Q_{S_b}$\\ benefit $B(S_b)$};
\draw[->] (SA.east) -- (mA.west);
\draw[->] (SB.east) -- (mB.west);
\draw[->] (mA.east) -- (oA.west);
\draw[->] (mB.east) -- (oB.west);
\end{tikzpicture}
\caption{Coalition-specific training. A partition assigns each participant to exactly one coalition; every coalition $S$ trains its own model $\btheta_S$ from its members' weighted objective~\eqref{eq:coalition-objective} and produces terminal quality $Q_S$ and expected benefit $B(S)$. Because each coalition trains a different model, the statistical outcome depends on the partition.}
\label{fig:coalition-partition}
\end{figure}

\subsection{Communication and aggregation}
Every coalition starts from the same deterministic initialization,
\begin{equation}
    \btheta_S^0=\btheta^{\mathrm{init}},
    \qquad \emptyset\neq S\subseteq\N.
    \label{eq:initialization}
\end{equation}
Coalition-specific random initialization can instead be included as an additional coordinate of the probability space below.

Training proceeds over rounds $r=0,1,\ldots,R-1$. Let $X_i^r\in\{0,1\}$ indicate whether participant $i$ successfully returns its local update in round $r$. We initially assume
\begin{equation}
    \Pr(X_i^r=1)=p_i,
\end{equation}
with independence across participants and rounds. Let $\xi_i^r$ collect participant $i$'s local training randomness in round $r$, including minibatch sampling, data augmentation, and optimizer noise. For precision, let $(\Omega_{\xi_i^r},\mathcal F_{\xi_i^r},\mathbb P_{\xi_i^r})$ be its probability space. The baseline joint space is the finite product
\begin{equation}
    (\Omega,\mathcal F,\mathbb P)
    =\bigotimes_{r=0}^{R-1}\bigotimes_{i\in\N}
    \left[
    (\Omega_{\xi_i^r},\mathcal F_{\xi_i^r},\mathbb P_{\xi_i^r})
    \otimes
    (\{0,1\},2^{\{0,1\}},\operatorname{Bern}(p_i))
    \right].
    \label{eq:training-probability-space}
\end{equation}
Thus, local randomness is independent across participants and rounds and independent of communication success. Correlated failures or optimization noise can be incorporated by replacing the product measure with a specified joint law.

For coalition $S$, let
\begin{equation}
    A_S^r=\{i\in S:X_i^r=1\}
\end{equation}
be the set of successfully received updates. The baseline local algorithm is $K_i$ steps of minibatch stochastic gradient descent. If $\mathcal B_i^{r,k}$ is the minibatch encoded in $\xi_i^r$ and $\widehat F_i(\cdot;\mathcal B_i^{r,k})$ is its empirical loss, define
\begin{align}
    \btheta_{i\mid S}^{r,0}&=\btheta_S^r,\notag\\
    \btheta_{i\mid S}^{r,k+1}
    &=\btheta_{i\mid S}^{r,k}
      -\eta_i^{r,k}\nabla\widehat F_i
       (\btheta_{i\mid S}^{r,k};\mathcal B_i^{r,k}),
      &&k=0,\ldots,K_i-1,\notag\\
    \btheta_{i\mid S}^{r+1}&=\btheta_{i\mid S}^{r,K_i}.
    \label{eq:local-sgd}
\end{align}
The economic and coalition-formation results below depend on the induced coalition values, not on SGD specifically, so another measurable local solver can replace \eqref{eq:local-sgd} without changing those results.
The coordinator applies
\begin{equation}
\btheta_S^{r+1}=
\begin{cases}
\displaystyle
\frac{\sum_{i\in A_S^r}m_i\btheta_{i\mid S}^{r+1}}
     {\sum_{i\in A_S^r}m_i}, & A_S^r\neq\emptyset,\\[2.2ex]
\btheta_S^r, & A_S^r=\emptyset.
\end{cases}
\label{eq:safe-aggregation}
\end{equation}
The second case removes the zero-denominator event present in an unconditional weighted average. \Cref{fig:safe-aggregation-round} summarizes one communication round. We do not assume that merging two coalitions necessarily reduces loss. Such an inequality is generally false under heterogeneous data and nonconvex learning. Any later performance bound will state the required learning assumptions explicitly.

\begin{figure}[t]
\centering
\begin{tikzpicture}[font=\small,>={Stealth[length=2.2mm]},
  mem/.style={draw,rounded corners,minimum width=25mm,minimum height=10mm,align=center,fill=gray!10},
  coord/.style={draw,rounded corners,minimum width=44mm,minimum height=9mm,align=center,fill=gray!20},
  ok/.style={->,thick},
  fail/.style={->,dashed,red!75!black,thick}]
\node[coord] (bc) at (0,0) {coordinator broadcasts $\btheta_S^r$};
\node[mem] (m1) at (-4,-1.7) {member $1$\\ $K_1$ local SGD steps};
\node[mem] (m2) at (0,-1.7)  {member $2$\\ $K_2$ local SGD steps};
\node[mem] (m3) at (4,-1.7)  {member $3$\\ $K_3$ local SGD steps};
\draw[->] (bc.south) -- (m1.north);
\draw[->] (bc.south) -- (m2.north);
\draw[->] (bc.south) -- (m3.north);
\node[coord] (agg) at (0,-3.7) {coordinator: safe aggregation~\eqref{eq:safe-aggregation}};
\draw[ok]   (m1.south) -- node[left,font=\footnotesize]{$X_1^r=1$} (agg.north west);
\draw[fail] (m2.south) -- node[fill=white,inner sep=1pt,font=\footnotesize]{$X_2^r=0$ (dropped)} (agg.north);
\draw[ok]   (m3.south) -- node[right,font=\footnotesize]{$X_3^r=1$} (agg.north east);
\node[align=center,font=\footnotesize,below=3mm of agg]
   {aggregate only $A_S^r=\{i:X_i^r=1\}$, weighted by $m_i$;\quad
    if $A_S^r=\emptyset$ then $\btheta_S^{r+1}=\btheta_S^r$ (model unchanged)};
\end{tikzpicture}
\caption{One communication round for a coalition $S$. The coordinator broadcasts the current model $\btheta_S^r$; each member runs $K_i$ local SGD steps~\eqref{eq:local-sgd}; successful uploads ($X_i^r=1$) are averaged with weights $m_i$, while dropped uploads ($X_i^r=0$) are skipped, following the safe-aggregation rule~\eqref{eq:safe-aggregation}. If no update arrives ($A_S^r=\emptyset$), the model is left unchanged.}
\label{fig:safe-aggregation-round}
\end{figure}

\subsection{Learning benefit}
Let $Q_S(\btheta_S^R)$ be the terminal quality of coalition $S$ measured on a specified validation distribution. Higher $Q_S$ means better performance. Examples include negative validation loss, accuracy subject to a calibration constraint, or a task-specific monetary score. The expected gross benefit of $S$ is
\begin{equation}
    B(S)=\E\!\left[b_S\!\left(Q_S(\btheta_S^R)\right)\right],
    \label{eq:benefit}
\end{equation}
where the expectation is taken over the explicitly modeled training variables $\xi$ and communication variables $X$. Equation~\eqref{eq:benefit} deliberately separates statistical quality from its monetary valuation. It also permits empirical estimation when a closed-form learning bound is unavailable.

In the baseline value flow, the coordinator operates and monetizes a coalition-specific model service and receives benefit $B(S)$. Data-owning participants supply local updates, receive access to the resulting service when relevant, and are paid transfers defined below. A member-owned cooperative interpretation would require a different allocation of the gross benefit and is not assumed here.

\section{Economic and Hedonic Model}
\subsection{Costs, transfers, and transferable surplus}
Let $C_0(S)$ denote the coordinator's non-transfer operating cost for coalition $S$, including communication, aggregation, and model-serving costs. Let $d_i(S)$ denote participant $i$'s cost of belonging to $S$, including computation, energy, communication, delay, and privacy cost. The total transferable surplus created by coalition $S$ is
\begin{equation}
    W(S)=B(S)-C_0(S)-\sum_{i\in S}d_i(S).
    \label{eq:surplus}
\end{equation}

\begin{assumption}[Pre-screened participant population]
The coordinator admits only individually viable participants, so
\begin{equation}
    W(\{i\})\geq0,\qquad i\in\N.
    \label{eq:singleton-viability}
\end{equation}
Admission control for participants who fail \eqref{eq:singleton-viability} is outside the baseline model.
\end{assumption}

The coordinator pays transfer $t_i(S)$ to participant $i\in S$. Participant utility and retained coordinator surplus are, respectively,
\begin{align}
    U_i(S)&=t_i(S)-d_i(S),
    \label{eq:agent-utility}\\
    R_0(S)&=B(S)-C_0(S)-\sum_{i\in S}t_i(S)
           =W(S)-\sum_{i\in S}U_i(S).
    \label{eq:retained-surplus}
\end{align}
We normalize the utility from operating alone to $U_i(\{i\})=0$. A nonzero outside option can be included by subtracting it from all coalition utilities involving $i$.

\begin{definition}[Weak budget feasibility]
An allocation rule $U=(U_i(S))_{S\subseteq\N,i\in S}$ is weakly budget feasible if
\begin{equation}
    \sum_{i\in S}U_i(S)\leq W(S),
    \qquad \emptyset\neq S\subseteq\N.
    \label{eq:budget-feasibility}
\end{equation}
Equivalently, the coordinator retains nonnegative surplus in every coalition that forms.
\end{definition}

Exact budget balance replaces the inequality in \eqref{eq:budget-feasibility} with equality. Weak budget feasibility is used because the coordinator may retain part of the surplus. Individual rationality in a realized partition requires $U_i(S)\geq0$ for every $i\in S$. We will show that it follows from Nash stability when operating alone is an available deviation.

\subsection{Hedonic preferences and Nash stability}
The allocation rule induces participant preferences. For coalitions $S,T\subseteq\N$ containing $i$,
\begin{equation}
    S\succeq_i T
    \quad\Longleftrightarrow\quad
    U_i(S)\geq U_i(T).
\end{equation}
Preferences are hedonic because $U_i(S)$ depends only on the members of $S$ and not on coalitions elsewhere in the partition. This assumption is appropriate when coalition models, benefits, costs, and transfers are operationally independent. Shared capacity constraints or cross-coalition competition would create externalities and require a partition-function game.

For $i\in\N$, let $S_\Pi(i)$ be the unique coalition in $\Pi$ containing $i$. A move by $i$ from $S_\Pi(i)$ to $T\in\Pi\cup\{\emptyset\}$ produces the coalition $T\cup\{i\}$.

\begin{definition}[Nash-stable partition]
A partition $\PiNS$ is Nash stable under allocation rule $U$ if
\begin{equation}
    U_i\bigl(S_{\PiNS}(i)\bigr)
    \geq U_i(T\cup\{i\}),
    \quad
    i\in\N,\quad T\in\PiNS\cup\{\emptyset\}.
    \label{eq:nash-stability}
\end{equation}
\end{definition}

Nash stability assumes that a destination coalition cannot reject an entrant. In federated learning, however, an entrant changes the coalition model and may harm existing members. We therefore also use the classical acceptance-based concept of individual stability \cite{DrezeGreenberg1980,BogomolnaiaJackson2002}.

\begin{definition}[Individual stability]
A move by $i$ from $S_\Pi(i)$ to $T\in\Pi\cup\{\emptyset\}$ is individually admissible if
\begin{align}
    U_i(T\cup\{i\})&>U_i(S_\Pi(i)),
    \label{eq:is-profitable}\\
    U_j(T\cup\{i\})&\geq U_j(T),
    &&j\in T.
    \label{eq:is-accepted}
\end{align}
A partition is individually stable if it admits no individually admissible move.
\label{def:individual-stability}
\end{definition}

Individual stability gives every destination member a veto but does not require consent from members left behind. Contractual individual stability would add departure consent and is outside the present scope.

\begin{proposition}[Stability implies participant rationality]
If $U_i(\{i\})=0$ for every $i$, then every Nash-stable partition is individually rational.
\end{proposition}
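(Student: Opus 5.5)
The argument is a single instantiation of the Nash-stability inequality \eqref{eq:nash-stability} with the empty destination coalition. Fix a Nash-stable partition $\PiNS$ and a participant $i\in\N$, and write $S=S_{\PiNS}(i)$. Individual rationality means $U_i(S)\geq 0$ for every such $i$, so I need to produce exactly this inequality.

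The key step is to take $T=\emptyset$ in \eqref{eq:nash-stability}. The definition explicitly allows this, since the quantifier ranges over $T\in\PiNS\cup\{\emptyset\}$. The move to $T=\emptyset$ produces the coalition $\emptyset\cup\{i\}=\{i\}$, so Nash stability gives
\begin{equation*}
U_i\bigl(S_{\PiNS}(i)\bigr)\geq U_i(\{i\})=0,
\end{equation*}
where the equality is the normalization in the hypothesis. Since $i$ was arbitrary, every member of every coalition in $\PiNS$ obtains nonnegative utility, and this is individual rationality in the sense defined after \eqref{eq:budget-feasibility}.

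There is one subtlety to check. If $i$ is already a singleton, so that $S=\{i\}$, the deviation to $\emptyset$ is trivial, but the inequality then reads $U_i(\{i\})\geq U_i(\{i\})$. The conclusion $U_i(S)=0\geq 0$ still holds, so no case split is needed.

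I expect no real obstacle. The only point requiring care is confirming that the definition of Nash stability includes the empty coalition among the admissible destinations, so that operating alone is an available deviation. This is exactly the condition the paper flagged before the statement. The argument uses neither budget feasibility nor any property of the allocation rule beyond the normalization $U_i(\{i\})=0$.
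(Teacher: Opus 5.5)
Your proposal is correct and is the same argument as the paper's proof: take $T=\emptyset$ in the Nash-stability condition to obtain $U_i(S_{\PiNS}(i))\geq U_i(\{i\})=0$ for every participant $i$. Your separate check of the singleton case is harmless but unnecessary, since the same inequality already covers it.
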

\begin{proof}
Choose $T=\emptyset$ in \eqref{eq:nash-stability}. Then
$U_i(S_{\PiNS}(i))\geq U_i(\{i\})=0$ for every participant.
\end{proof}

\section{Symmetric Pairwise Surplus Allocation}
\subsection{Allocation class}
Let $v_{ij}=v_{ji}\in\R$ be the pairwise surplus assigned to each of participants $i$ and $j$ when they belong to the same coalition, and let $v_{ii}=0$. The induced participant utility is
\begin{equation}
    U_i^{v}(S)=\sum_{j\in S\setminus\{i\}}v_{ij}.
    \label{eq:additive-utility}
\end{equation}
Negative values represent incompatibility, for example statistical interference, delay, or a costly collaboration relationship. Positive values represent beneficial complementarity.

The pairwise restriction applies to the transfer-induced utilities, not to the physical coalition surplus $W(S)$. The learning benefit and costs may contain asymmetric and higher-order effects. Each endpoint receives the full value $v_{ij}$, which explains the factor of two below. A rule that splits a single pair value between the endpoints would use different notation and budget accounting.

For $S\subseteq\N$, define the unordered edge set
\begin{equation}
    E(S)=\bigl\{\{i,j\}:i,j\in S,\ i<j\bigr\}.
\end{equation}
Because every pair value is assigned to both endpoints,
\begin{equation}
    \sum_{i\in S}U_i^v(S)=2\sum_{\{i,j\}\in E(S)}v_{ij}.
\end{equation}
Thus, the pairwise allocation is weakly budget feasible if
\begin{equation}
    2\sum_{\{i,j\}\in E(S)}v_{ij}\leq W(S),
    \qquad \emptyset\neq S\subseteq\N.
    \label{eq:pair-budget}
\end{equation}

To exclude economically meaningless unbounded negative pair values, the design model may impose compatibility bounds
\begin{equation}
    \underline v_{ij}\leq v_{ij}\leq\overline v_{ij}.
    \label{eq:pair-bounds}
\end{equation}
The bounds can be estimated from pairwise validation effects, cost differences, or explicit policy constraints. Define the feasible allocation polytope
\begin{equation}
\calV(W)=\left\{v:
\eqref{eq:pair-budget}\text{ and }\eqref{eq:pair-bounds}\text{ hold}
\right\}.
\label{eq:allocation-polytope}
\end{equation}
The following result gives an exact feasibility test. Its direct form still contains one condition for every coalition, which motivates the later search for useful structural reductions.

\begin{theorem}[Feasibility of bounded pairwise allocations]
The polytope $\calV(W)$ is nonempty if and only if
\begin{align}
    \underline v_{ij}&\leq\overline v_{ij}, &&i<j,
    \label{eq:ordered-bounds}\\
    2\sum_{\{i,j\}\in E(S)}\underline v_{ij}&\leq W(S),
    &&\emptyset\neq S\subseteq\N.
    \label{eq:lower-feasibility}
\end{align}
\label{thm:pair-feasibility}
\end{theorem}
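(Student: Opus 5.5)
The proof reduces to a monotonicity argument. Each budget constraint \eqref{eq:pair-budget} is a linear inequality in $v$ whose coefficients on the pair variables are all nonnegative: coefficient $2$ on $v_{ij}$ for $\{i,j\}\in E(S)$ and $0$ otherwise. The left-hand side is therefore coordinatewise nondecreasing in $v$. So if any feasible point exists, lowering every coordinate to its lower bound preserves feasibility. Hence the candidate witness I will use is $v=\underline v$, meaning $v_{ij}=\underline v_{ij}$ for all $i<j$, with symmetry $v_{ji}=v_{ij}$ and $v_{ii}=0$ imposed as in the allocation class.

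For necessity, I assume $v\in\calV(W)$. First, \eqref{eq:pair-bounds} gives $\underline v_{ij}\le v_{ij}\le\overline v_{ij}$, which already yields \eqref{eq:ordered-bounds}. Next, for any nonempty $S$ I sum the lower bounds over $E(S)$ to obtain
$2\sum_{E(S)}\underline v_{ij}\le 2\sum_{E(S)}v_{ij}\le W(S)$,
which is \eqref{eq:lower-feasibility}.

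For sufficiency, I assume \eqref{eq:ordered-bounds} and \eqref{eq:lower-feasibility} hold and set $v=\underline v$. The bounds \eqref{eq:pair-bounds} are satisfied precisely because of \eqref{eq:ordered-bounds}. The budget constraint \eqref{eq:pair-budget} for this $v$ is literally \eqref{eq:lower-feasibility}. Singletons, where $E(S)=\emptyset$, need no separate treatment, since the constraint there reads $0\le W(\{i\})$; this is included in \eqref{eq:lower-feasibility} and is also guaranteed by \eqref{eq:singleton-viability}.

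There is no real obstacle here. The only point needing care is that the lower bounds must be finite real numbers, so that $\underline v$ is an actual point. If some $\underline v_{ij}=-\infty$ were allowed, the same argument would go through by choosing $v_{ij}$ sufficiently negative, or equal to $\overline v_{ij}$ when that is finite, with the $-\infty$ terms in \eqref{eq:lower-feasibility} interpreted accordingly. I would state finiteness explicitly, since the paper introduces the bounds precisely to exclude unbounded negative values. I would also remark that $\calV(W)$ is a polytope and hence closed and convex, but this is not needed for the argument.
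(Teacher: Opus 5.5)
Your proof is correct and is essentially the paper's own argument. Necessity comes from the box constraints together with the nonnegative coefficients in the budget inequalities, and sufficiency comes from the witness $v=\underline v$, for which the budget constraints are literally \eqref{eq:lower-feasibility}. One caveat on your optional aside: if $\underline v_{ij}=-\infty$ were allowed, setting $v_{ij}=\overline v_{ij}$ would not in general restore feasibility, since you need $v_{ij}$ sufficiently negative regardless; this extension lies outside the theorem, whose bounds are real numbers.
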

\begin{proof}
Necessity of \eqref{eq:ordered-bounds} follows from the box constraints. For every feasible $v$, the componentwise inequalities $v_{ij}\geq\underline v_{ij}$ and the nonnegative coefficients in \eqref{eq:pair-budget} imply
\[
2\sum_{\{i,j\}\in E(S)}\underline v_{ij}
\leq
2\sum_{\{i,j\}\in E(S)}v_{ij}
\leq W(S),
\]
which proves necessity of \eqref{eq:lower-feasibility}. Conversely, if \eqref{eq:ordered-bounds} and \eqref{eq:lower-feasibility} hold, setting $v_{ij}=\underline v_{ij}$ satisfies every box and coalition-budget constraint. Hence $v\in\calV(W)$.
\end{proof}

For a singleton $S=\{i\}$, condition \eqref{eq:lower-feasibility} reduces to $0\leq W(\{i\})$, which is exactly the pre-screening condition \eqref{eq:singleton-viability}.

\begin{corollary}[Nonnegative pair rewards]
If $\underline v_{ij}=0$ and $\overline v_{ij}\geq0$ for all $i<j$, then $\calV(W)$ is nonempty if and only if $W(S)\geq0$ for every nonempty coalition $S$.
\end{corollary}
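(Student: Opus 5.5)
The plan is to specialize \cref{thm:pair-feasibility} to the bounds $\underline v_{ij}=0$ and $\overline v_{ij}\geq 0$. Under this specialization the two conditions of the theorem either hold automatically or become exactly the stated condition $W(S)\geq0$.

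First, I will check condition \eqref{eq:ordered-bounds}. It reads $0\leq\overline v_{ij}$ for every $i<j$, and this holds by hypothesis. So it imposes no restriction, and nonemptiness of $\calV(W)$ is equivalent to \eqref{eq:lower-feasibility} alone.

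Second, I will evaluate the left-hand side of \eqref{eq:lower-feasibility}. With $\underline v_{ij}=0$ it is $2\sum_{\{i,j\}\in E(S)}0=0$ for every nonempty $S$. This includes singletons, where $E(S)=\emptyset$ and the sum is empty. Hence \eqref{eq:lower-feasibility} becomes $0\leq W(S)$ for every nonempty $S\subseteq\N$, which is exactly the condition in the corollary. Combining this with the first step gives both directions of the equivalence at once.

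There is no substantive obstacle. The only care needed is to confirm that the singleton case is covered, which is consistent with \eqref{eq:singleton-viability}. As an optional cross-check of sufficiency, the allocation $v\equiv0$ lies in $\calV(W)$ whenever all $W(S)\geq0$, since it satisfies the box constraints and gives $0\leq W(S)$ in \eqref{eq:pair-budget}.
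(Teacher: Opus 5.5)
Your proposal is correct and follows the paper's intended route. The corollary is a direct specialization of \Cref{thm:pair-feasibility}: \eqref{eq:ordered-bounds} holds by hypothesis, and \eqref{eq:lower-feasibility} reduces to $0\leq W(S)$ for every nonempty $S$, singletons included. Your zero-allocation cross-check is the same observation the paper makes immediately after the corollary.
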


The zero allocation is feasible under the corollary but makes every participant indifferent among all coalitions. Feasibility alone is therefore not a sufficient design objective. Positive lower bounds, compatibility targets, or the objective in \eqref{eq:allocation-lp} are needed to create informative incentives.

\subsection{Exact potential and stable-partition existence}
For a partition $\Pi$, define
\begin{equation}
    P_v(\Pi)=\sum_{S\in\Pi}\sum_{\{i,j\}\in E(S)}v_{ij}.
    \label{eq:potential}
\end{equation}
Equivalently,
\begin{equation}
    P_v(\Pi)=\frac{1}{2}\sum_{i\in\N}U_i^v(S_\Pi(i)).
    \label{eq:potential-participant-utility}
\end{equation}
Exact potential games are due to Monderer and Shapley \cite{MondererShapley1996}: every unilateral deviation changes the deviator's utility and the potential by exactly the same amount.

\begin{theorem}[Exact potential]
For every symmetric pairwise allocation $v$, the hedonic game induced by \eqref{eq:additive-utility} is an exact potential game with potential \eqref{eq:potential}.
\label{thm:exact-potential}
\end{theorem}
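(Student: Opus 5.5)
The plan is to verify the Monderer--Shapley identity directly. Fix a partition $\Pi$, a participant $i$ with current coalition $S=S_\Pi(i)$, and a destination $T\in\Pi\cup\{\emptyset\}$ with $T\neq S$. The deviation produces the partition
\[
\Pi'=\bigl(\Pi\setminus\{S,T\}\bigr)\cup\{S\setminus\{i\},\,T\cup\{i\}\},
\]
where an empty $S\setminus\{i\}$ is discarded and $T=\emptyset$ is handled by treating $\emptyset$ as contributing nothing. The goal is to show
\[
P_v(\Pi')-P_v(\Pi)=U_i^v(T\cup\{i\})-U_i^v(S).
\]

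First, note that $P_v$ in \eqref{eq:potential} is a sum over coalitions of the edge sums $\sum_{\{j,k\}\in E(C)}v_{jk}$. Every coalition of $\Pi$ other than $S$ and $T$ appears unchanged in $\Pi'$, so those terms cancel in the difference. Next, split the edges of $E(S)$ into the edges incident to $i$ and the edges of $E(S\setminus\{i\})$. The incident edges have total weight $\sum_{j\in S\setminus\{i\}}v_{ij}=U_i^v(S)$ by \eqref{eq:additive-utility}. In the same way, $E(T\cup\{i\})$ consists of $E(T)$ together with the edges $\{i,j\}$ for $j\in T$, whose total weight is $U_i^v(T\cup\{i\})$. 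Symmetry $v_{ij}=v_{ji}$ is used here: an unordered edge $\{i,j\}$ carries the same value that appears in $i$'s utility, regardless of which endpoint is indexed first. Substituting these decompositions, the terms $E(S\setminus\{i\})$ and $E(T)$ cancel, which leaves exactly the displayed identity.

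The degenerate cases need a brief check. If $T=\emptyset$, then $E(\{i\})=\emptyset$ and $U_i^v(\{i\})=0$, consistent with the normalization. If $S=\{i\}$, the removed coalition contributes no edges. Since $v_{ii}=0$, no self-loop term arises in either case. Equivalently, one can argue from \eqref{eq:potential-participant-utility}: the change in $\tfrac12\sum_j U_j^v$ equals the deviator's change plus the changes of the other members, and by symmetry the other members' changes sum to exactly the deviator's change, so halving recovers the identity. I would use this as a consistency check.

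I do not expect a genuine obstacle. The only place requiring care is the bookkeeping of edge sets when the origin becomes empty or the destination is the empty coalition, so that $\Pi'$ is a valid partition and the cancellation is exact. The factor $\tfrac12$ in \eqref{eq:potential-participant-utility} must also not be confused with the edge-based form: the potential counts each edge once, whereas the deviator's utility counts each incident edge once, so the two match without any factor.
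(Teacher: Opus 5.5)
Your proposal is correct and follows the paper's proof: every coalition term not involving $i$ cancels, and the remaining incident-edge sums are exactly $U_i^v(T\cup\{i\})$ and $U_i^v(S)$, so the change in potential equals the deviator's change in utility. Your explicit treatment of the degenerate cases $T=\emptyset$ and $S=\{i\}$, and your consistency check through \eqref{eq:potential-participant-utility}, are sound elaborations of the same argument.
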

\begin{proof}
Consider a unilateral move by participant $i$ from coalition $S$ to coalition $T$, where $i\in S$ and $i\notin T$. Terms in \eqref{eq:potential} that do not involve $i$ remain unchanged. Therefore,
\begin{align}
P_v(\Pi')-P_v(\Pi)
&=\sum_{j\in T}v_{ij}-\sum_{j\in S\setminus\{i\}}v_{ij}\\
&=U_i^v(T\cup\{i\})-U_i^v(S).
\end{align}
The change in the potential equals the deviating participant's utility change, which proves the exact-potential property.
\end{proof}

The existence of a potential for symmetric additively separable hedonic preferences is classical \cite{BogomolnaiaJackson2002}. The explicit proof is retained because the same potential enters the budget and welfare analysis below.

\begin{corollary}[Existence]
Every finite symmetric pairwise hedonic game admits at least one Nash-stable partition.
\label{cor:existence}
\end{corollary}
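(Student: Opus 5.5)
The plan is to derive the corollary directly from \cref{thm:exact-potential} through the standard maximum-potential argument. Since $\N$ is finite, the set of partitions $\Part(\N)$ is finite; its cardinality is the Bell number of $n$. The potential $P_v$ in \eqref{eq:potential} is a real-valued function on this finite set, so it attains a maximum at some partition $\Pi^\star$. I will show that every such maximizer is Nash stable in the sense of \eqref{eq:nash-stability}.

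Suppose for contradiction that $\Pi^\star$ is not Nash stable. Then there exist a participant $i$ and a destination $T\in\Pi^\star\cup\{\emptyset\}$ with $U_i^v(T\cup\{i\})>U_i^v(S_{\Pi^\star}(i))$. If $T=S_{\Pi^\star}(i)$, then $T\cup\{i\}=T$ and the inequality would read $U>U$, which is impossible; so we may take $T\neq S_{\Pi^\star}(i)$, meaning $i\notin T$. The move produces a partition $\Pi'$. In $\Pi'$, $S_{\Pi^\star}(i)\setminus\{i\}$ is dropped if it becomes empty, and $T\cup\{i\}$ either replaces $T$ or, when $T=\emptyset$, is the new singleton $\{i\}$.

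By \cref{thm:exact-potential},
\[
P_v(\Pi')-P_v(\Pi^\star)=U_i^v(T\cup\{i\})-U_i^v(S_{\Pi^\star}(i))>0,
\]
which contradicts the maximality of $\Pi^\star$. Hence $\Pi^\star$ is Nash stable.

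The main point needing care is bookkeeping rather than any deep step. I must confirm that the exact-potential identity covers the two edge cases, $T=\emptyset$ and the case where $i$'s origin coalition becomes empty. In both cases removing or adding empty coalitions leaves \eqref{eq:potential} unchanged, since $E(\emptyset)$ and $E(\{i\})$ are empty. So the proof of \cref{thm:exact-potential} applies verbatim. As a remark, the same argument shows that strict better-response dynamics strictly increase $P_v$ and therefore terminate on the finite partition space.
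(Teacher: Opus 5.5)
Your proof is correct and follows the paper's own argument: take a maximizer of $P_v$ over the finite set $\Part(\N)$, and note via \Cref{thm:exact-potential} that any strictly profitable unilateral move would strictly raise $P_v$, contradicting maximality. Your extra bookkeeping makes explicit what the paper leaves implicit, and it is sound: the trivial case $T=S_{\Pi}(i)$, the move to $\emptyset$, and an emptied origin coalition. One minor point is that $\Pi^\star$ already denotes a welfare-optimal partition in the paper, so a different symbol for the potential maximizer would avoid a clash.
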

\begin{proof}
The finite set $\Part(\N)$ contains a maximizer of $P_v$. If a participant had a strictly profitable unilateral deviation from such a maximizer, \Cref{thm:exact-potential} would imply a strict increase in $P_v$, contradicting maximality.
\end{proof}

\begin{corollary}[Finite improvement property]
Every sequence of strict unilateral utility improvements terminates at a Nash-stable partition.
\label{cor:fip}
\end{corollary}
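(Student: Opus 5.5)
The proof is the standard argument that a finite exact potential game has the finite improvement property. It uses only \Cref{thm:exact-potential} and the finiteness of $\Part(\N)$.

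Let $\Pi^0,\Pi^1,\Pi^2,\ldots$ be a sequence of partitions in which each $\Pi^{k+1}$ arises from $\Pi^k$ by a single participant $i_k$ moving from $S_{\Pi^k}(i_k)$ to some $T\in\Pi^k\cup\{\emptyset\}$, with
\[
U_{i_k}^v(T\cup\{i_k\})>U_{i_k}^v\bigl(S_{\Pi^k}(i_k)\bigr).
\]
By \Cref{thm:exact-potential}, the increment $P_v(\Pi^{k+1})-P_v(\Pi^k)$ equals this utility gain, so it is strictly positive. This includes the case $T=\emptyset$: the proof of \Cref{thm:exact-potential} covers it, since $\sum_{j\in\emptyset}v_{ij}=0$. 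The potential is therefore strictly increasing along the sequence, and no partition can appear twice. Because $\Part(\N)$ is finite, its cardinality being the Bell number $B_n$, the sequence has at most $B_n$ terms. It therefore terminates.

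It remains to identify the terminal partition. A sequence of strict improvements stops exactly at a partition $\Pi^K$ from which no participant has a strictly profitable move to any $T\in\Pi^K\cup\{\emptyset\}$. That is, $U_i^v(S_{\Pi^K}(i))\geq U_i^v(T\cup\{i\})$ for all $i$ and all such $T$, which is precisely condition \eqref{eq:nash-stability}. So the terminal partition is Nash stable.

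I expect no real obstacle. The only point needing care is the interpretation of "terminates": the claim is that every such sequence is finite, and that a maximal one ends at a partition admitting no strict improvement. The optional refinement of a quantitative bound is easy to add. If the $v_{ij}$ are integers, or more generally multiples of some $\delta>0$, then each step raises $P_v$ by at least $\delta$. Since $P_v$ is bounded between $\sum_{v_{ij}<0}v_{ij}$ and $\sum_{v_{ij}>0}v_{ij}$, the number of steps is at most $\sum_{i<j}|v_{ij}|/\delta$. This gives a pseudo-polynomial step bound in addition to the cruder bound $B_n$.
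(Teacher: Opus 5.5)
Your proof is correct and follows the paper's argument: each strict improvement raises $P_v$ by the same amount as the deviator's utility gain (\Cref{thm:exact-potential}), finiteness of $\Part(\N)$ rules out repetition, and a terminal partition with no strict improvement is Nash stable by definition. Your pseudo-polynomial step bound, for pair values that are integer multiples of a common $\delta>0$, is also correct; the paper does not state it.
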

\begin{proof}
Each strict improvement strictly increases $P_v$ by \Cref{thm:exact-potential}. Since the set of partitions is finite, no partition can repeat and the sequence must terminate. At termination, no strict profitable unilateral move remains.
\end{proof}

\begin{proposition}[Destination consent and individual stability]
Every Nash-stable partition is individually stable. Under symmetric pairwise utilities, a profitable move by $i$ to $T$ is accepted by all destination members if and only if $v_{ij}\geq0$ for every $j\in T$. From any initial partition, every sequence of individually admissible moves terminates at an individually stable partition.
\label{prop:individual-stability}
\end{proposition}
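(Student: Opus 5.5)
The proposition has three claims, and I would prove them in order, each directly from the definitions and \Cref{thm:exact-potential}.

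\textbf{Nash stability implies individual stability.} I would argue by contradiction. Suppose a Nash-stable partition $\PiNS$ admitted an individually admissible move by $i$ to some $T\in\PiNS\cup\{\emptyset\}$. Condition \eqref{eq:is-profitable} alone gives $U_i(T\cup\{i\})>U_i(S_{\PiNS}(i))$, which violates \eqref{eq:nash-stability}. The acceptance condition \eqref{eq:is-accepted} is never needed. This step holds for any allocation rule $U$, not only pairwise ones.

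\textbf{Characterization of acceptance.} Under \eqref{eq:additive-utility}, for $j\in T$ and $i\notin T$ we have
\[
U_j^v(T\cup\{i\})-U_j^v(T)=v_{ji}=v_{ij},
\]
because the only new term in $j$'s sum is the pair $\{i,j\}$. Hence \eqref{eq:is-accepted} holds for all $j\in T$ if and only if $v_{ij}\ge 0$ for every $j\in T$. When $T=\emptyset$ the condition is vacuous on both sides, so the equivalence also covers moving alone.

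\textbf{Termination.} Every individually admissible move is in particular a strict unilateral utility improvement for the mover. By \Cref{thm:exact-potential}, it therefore strictly increases $P_v$. Since $\Part(\N)$ is finite, no partition can recur along such a sequence, so every sequence is finite; equivalently, this is \Cref{cor:fip} applied to the subsequence class of accepted moves. At termination, no individually admissible move remains, which is exactly the definition of an individually stable partition.

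The only point requiring care is the termination claim. There, one must note that restricting moves by consent does not break the potential argument, because the argument uses only the mover's strict gain and never the behavior of the other participants. I would also remark that individually stable partitions exist, either as a byproduct of termination or because Nash-stable partitions exist by \Cref{cor:existence} and are individually stable by the first claim.
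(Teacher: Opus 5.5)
Your proposal is correct and follows the paper's proof essentially step for step: condition \eqref{eq:is-profitable} alone is already a profitable unilateral deviation that Nash stability forbids, the destination member's utility change under \eqref{eq:additive-utility} is exactly $v_{ij}$, and termination follows because every admissible move strictly raises $P_v$ by \Cref{thm:exact-potential} on the finite set $\Part(\N)$. Your extra remarks are correct but not needed for the statement: the first claim holds for any allocation rule, the case $T=\emptyset$ is vacuous, and individually stable partitions exist.
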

\begin{proof}
Every individually admissible move is a profitable unilateral move, so Nash stability rules it out. Under \eqref{eq:additive-utility}, destination member $j$ experiences the utility change
\[
    U_j^v(T\cup\{i\})-U_j^v(T)=v_{ij},
\]
which proves the acceptance characterization. Every individually admissible move strictly raises the deviator's utility and therefore strictly raises $P_v$ by \Cref{thm:exact-potential}. Finiteness of $\Part(\N)$ implies termination. At termination no individually admissible move remains, which is individual stability.
\end{proof}

\begin{remark}[Local and global optimality]
Every global maximizer of $P_v$ is Nash stable. A Nash-stable partition need only be a local maximum with respect to unilateral moves and need not maximize $P_v$ globally. Stability, potential optimality, and social-welfare optimality are therefore distinct concepts.
\end{remark}

\subsection{Coordinator slack and welfare efficiency}
For any pairwise allocation $v$, define coalition-level budget slack and partition-level slack by
\begin{align}
    r_v(S)&=W(S)-2\sum_{\{i,j\}\in E(S)}v_{ij},
    \label{eq:coalition-slack}\\
    R_v(\Pi)&=\sum_{S\in\Pi}r_v(S).
    \label{eq:partition-slack}
\end{align}
Equation~\eqref{eq:retained-surplus} shows unconditionally that $r_v(S)$ is the coordinator surplus retained from coalition $S$ under the pairwise utility rule. Budget feasibility is equivalent to $r_v(S)\geq0$ for every nonempty coalition. Define total social welfare as
\begin{equation}
    \operatorname{SW}(\Pi)=\sum_{S\in\Pi}W(S).
    \label{eq:social-welfare}
\end{equation}
The welfare accounting decomposes as
\begin{equation}
    \operatorname{SW}(\Pi)=2P_v(\Pi)+R_v(\Pi).
    \label{eq:welfare-decomposition}
\end{equation}
Thus, the potential measures half of total participant utility, while $R_v$ measures the remaining welfare retained by the coordinator.

\begin{theorem}[Additive welfare guarantee from optimal-partition slack]
Let $v$ be a weakly budget-feasible symmetric pairwise allocation, let $\Pi^P\in\arg\max_{\Pi\in\Part(\N)}P_v(\Pi)$, and let $\Pi^\star\in\arg\max_{\Pi\in\Part(\N)}\operatorname{SW}(\Pi)$. Then $\Pi^P$ is Nash stable and
\begin{equation}
    \operatorname{SW}(\Pi^P)
    \geq
    \operatorname{SW}(\Pi^\star)-R_v(\Pi^\star).
    \label{eq:additive-welfare-bound}
\end{equation}
\label{thm:slack-welfare}
\end{theorem}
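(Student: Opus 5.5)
The plan has two parts: stability, then the welfare bound.

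\emph{Nash stability.} I will get this directly from \Cref{cor:existence}, or more precisely from its proof. $\Pi^P$ maximizes $P_v$ over the finite set $\Part(\N)$. Suppose some participant had a strictly profitable unilateral move. By \Cref{thm:exact-potential}, that move would strictly increase $P_v$, which contradicts maximality. This part needs only symmetry, not budget feasibility.

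\emph{Welfare bound.} I will use the decomposition \eqref{eq:welfare-decomposition}, $\operatorname{SW}(\Pi)=2P_v(\Pi)+R_v(\Pi)$, at both partitions and chain three inequalities.

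First, apply the decomposition to $\Pi^P$ and drop the slack term. This is where weak budget feasibility enters. Feasibility \eqref{eq:pair-budget} is equivalent to $r_v(S)\geq0$ for every nonempty coalition $S$. Summing over the coalitions of $\Pi^P$ in \eqref{eq:partition-slack} gives $R_v(\Pi^P)\geq0$, hence $\operatorname{SW}(\Pi^P)\geq 2P_v(\Pi^P)$.

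Second, optimality of $\Pi^P$ for the potential gives $P_v(\Pi^P)\geq P_v(\Pi^\star)$.

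Third, apply the decomposition to $\Pi^\star$: $2P_v(\Pi^\star)=\operatorname{SW}(\Pi^\star)-R_v(\Pi^\star)$.

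Chaining these gives \eqref{eq:additive-welfare-bound}:
\[
\operatorname{SW}(\Pi^P)\geq 2P_v(\Pi^P)\geq 2P_v(\Pi^\star)=\operatorname{SW}(\Pi^\star)-R_v(\Pi^\star).
\]

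\emph{Verifying the decomposition.} The decomposition is stated in the paper, but I would check it in one line for completeness. Summing \eqref{eq:coalition-slack} over $S\in\Pi$ gives
\[
R_v(\Pi)=\operatorname{SW}(\Pi)-2\sum_{S\in\Pi}\sum_{\{i,j\}\in E(S)}v_{ij}=\operatorname{SW}(\Pi)-2P_v(\Pi),
\]
by \eqref{eq:social-welfare} and \eqref{eq:potential}.

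There is no real obstacle. The one point to handle carefully is the sign bookkeeping: weak budget feasibility is used only to discard $R_v(\Pi^P)$. The slack at $\Pi^\star$ is not dropped; it stays in the bound as the additive loss term.
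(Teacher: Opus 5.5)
Your proposal is correct and follows the paper's proof essentially verbatim: Nash stability of $\Pi^P$ comes from maximality of $P_v$ together with the exact-potential property, and the welfare bound chains the decomposition $\operatorname{SW}(\Pi)=2P_v(\Pi)+R_v(\Pi)$ at $\Pi^\star$, potential optimality of $\Pi^P$, and $R_v(\Pi^P)\geq0$ from weak budget feasibility. Your one-line check of the decomposition and your remark that feasibility is used only to discard $R_v(\Pi^P)$ are both accurate.
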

\begin{proof}
Nash stability of $\Pi^P$ follows from \Cref{thm:exact-potential}. Using \eqref{eq:welfare-decomposition}, optimality of $\Pi^P$ for $P_v$, and nonnegativity of $R_v(\Pi^P)$,
\begin{align*}
\operatorname{SW}(\Pi^\star)
&=2P_v(\Pi^\star)+R_v(\Pi^\star)\\
&\leq2P_v(\Pi^P)+R_v(\Pi^\star)\\
&\leq\operatorname{SW}(\Pi^P)+R_v(\Pi^\star).
\end{align*}
Rearranging proves \eqref{eq:additive-welfare-bound}.
\end{proof}

This is an existence and price-of-stability statement, in the standard sense of comparing the best stable outcome with the optimum \cite{AnshelevichEtAl2008PoS}. Computing the global potential maximizer $\Pi^P$ is NP-hard in general, as discussed after \Cref{prop:correlation-equivalence}. \Cref{thm:approximate-stable-welfare} below gives an objective-value guarantee for an approximation followed by stabilization.

\begin{corollary}[Exact budget balance]
If $r_v(S)=0$ for every nonempty coalition $S$, then a welfare-optimal Nash-stable partition exists. Whenever the welfare ratio is defined, the price of stability is one.
\label{cor:exact-balance-pos}
\end{corollary}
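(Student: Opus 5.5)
The plan is to specialize \Cref{thm:slack-welfare} to the exactly balanced case. The hypothesis $r_v(S)=0$ for every nonempty coalition $S$ already gives weak budget feasibility, because the constraint \eqref{eq:pair-budget} holds with equality. So $v$ satisfies the hypotheses of \Cref{thm:slack-welfare}. By \eqref{eq:partition-slack}, we also get $R_v(\Pi)=\sum_{S\in\Pi}r_v(S)=0$ for every partition $\Pi$, and in particular for any welfare maximizer $\Pi^\star$.

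Next, I would take $\Pi^P\in\arg\max P_v$. This maximizer exists because $\Part(\N)$ is finite, as in \Cref{cor:existence}, and it is Nash stable by \Cref{thm:exact-potential}. The bound \eqref{eq:additive-welfare-bound} then reads $\operatorname{SW}(\Pi^P)\geq\operatorname{SW}(\Pi^\star)$. Combined with the trivial inequality $\operatorname{SW}(\Pi^P)\leq\operatorname{SW}(\Pi^\star)$, this makes $\Pi^P$ welfare optimal. A more direct route to the same conclusion uses \eqref{eq:welfare-decomposition}: with $R_v\equiv0$ it gives $\operatorname{SW}=2P_v$ on all of $\Part(\N)$. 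Hence the welfare maximizers and the potential maximizers coincide, and every welfare-optimal partition is Nash stable.

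For the price-of-stability claim, I would define the ratio of the optimal welfare to the best Nash-stable welfare. "Defined" means the denominator is positive, or both quantities are nonzero with the conventional ratio. We have exhibited a Nash-stable partition whose welfare equals $\operatorname{SW}(\Pi^\star)$, and no stable partition can exceed the optimum. So the best stable welfare equals the optimum, and the ratio is one whenever it is defined.

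There is no genuine obstacle here. The only point needing care is the interpretation of "whenever the welfare ratio is defined". By the pre-screening assumption \eqref{eq:singleton-viability}, the singleton partition has nonnegative welfare, so $\operatorname{SW}(\Pi^\star)\geq0$. I would exclude only the degenerate case $\operatorname{SW}(\Pi^\star)=0$, which gives $0/0$, and note that in that case the existence part of the statement still holds.
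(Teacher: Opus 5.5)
Your proof is correct and takes the paper's route: exact balance gives $R_v(\Pi^\star)=0$, and \Cref{thm:slack-welfare} then shows that the potential maximizer $\Pi^P$ is a welfare-optimal Nash-stable partition. Your extra remarks are also valid and go slightly beyond the paper's terse proof: exact balance already implies weak budget feasibility, and $\operatorname{SW}=2P_v$ on all of $\Part(\N)$, so every welfare optimum is in fact Nash stable.
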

\begin{proof}
Exact balance gives $R_v(\Pi^\star)=0$. Apply \Cref{thm:slack-welfare}.
\end{proof}

\begin{remark}[Applicability of exact balance]
Exact balance in every coalition is achievable only when there are pair values satisfying
\[
    W(S)=2\sum_{\{i,j\}\in E(S)}v_{ij}
    \qquad\text{for every nonempty }S.
\]
Thus, it requires exact pairwise representability of the physical surplus and forces $W(\{i\})=0$ for all singletons. This is a restrictive joint condition on $W$ and $v$, not a transfer choice available for arbitrary clustered-FL surplus. The uniform-slack result below covers a near-representable case.
\end{remark}

\begin{corollary}[Uniform coalition slack]
If $v$ is weakly budget feasible and $r_v(S)\leq\varepsilon$ for every nonempty coalition $S$, then there exists a Nash-stable partition $\Pi^P$ satisfying
\begin{equation}
    \operatorname{SW}(\Pi^P)
    \geq
    \operatorname{SW}(\Pi^\star)-n\varepsilon.
\end{equation}
\end{corollary}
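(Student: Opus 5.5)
The plan is to obtain the corollary directly from \Cref{thm:slack-welfare}, with \Cref{cor:existence} only guaranteeing that the relevant maximizer exists. Because $\Part(\N)$ is finite, a maximizer $\Pi^P\in\arg\max_{\Pi}P_v(\Pi)$ and a welfare maximizer $\Pi^\star$ both exist. Since $v$ is weakly budget feasible, \Cref{thm:slack-welfare} applies and gives two facts: $\Pi^P$ is Nash stable, and
\[
\operatorname{SW}(\Pi^P)\geq\operatorname{SW}(\Pi^\star)-R_v(\Pi^\star).
\]
So the whole task is to bound the optimal-partition slack $R_v(\Pi^\star)$ by $n\varepsilon$.

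By \eqref{eq:partition-slack}, $R_v(\Pi^\star)=\sum_{S\in\Pi^\star}r_v(S)$, and the hypothesis gives $r_v(S)\leq\varepsilon$ for each summand. Every coalition in a partition is nonempty and the coalitions are disjoint subsets of $\N$, so $|\Pi^\star|\leq n$. If $\varepsilon\geq0$, this yields
\[
R_v(\Pi^\star)\leq|\Pi^\star|\,\varepsilon\leq n\varepsilon.
\]
Substituting this into the display from \Cref{thm:slack-welfare} proves the claim.

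The one point needing care is the sign of $\varepsilon$, because the step $|\Pi^\star|\varepsilon\leq n\varepsilon$ requires $\varepsilon\geq0$. This is automatic whenever $\calV(W)$ is in use: weak budget feasibility gives $0\leq r_v(S)\leq\varepsilon$ for any nonempty $S$, so $\varepsilon\geq0$. I would state this explicitly in the proof. Beyond that, the argument is direct bookkeeping on top of \Cref{thm:slack-welfare}, and I expect no real obstacle.
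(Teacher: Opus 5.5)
Your proposal is correct and matches the paper's proof: both apply \Cref{thm:slack-welfare} and bound $R_v(\Pi^\star)$ by $n\varepsilon$, using the fact that $\Pi^\star$ has at most $n$ coalitions. Your remark that weak budget feasibility forces $\varepsilon\geq0$, which is needed for the step $|\Pi^\star|\varepsilon\leq n\varepsilon$, is a small detail the paper leaves implicit.
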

\begin{proof}
The welfare-optimal partition contains at most $n$ coalitions, so $R_v(\Pi^\star)\leq n\varepsilon$. Apply \Cref{thm:slack-welfare}.
\end{proof}

\begin{corollary}[Relative slack and price of stability]
Under the hypotheses of \Cref{thm:slack-welfare}, suppose $\operatorname{SW}(\Pi^\star)>0$ and
\begin{equation}
    R_v(\Pi^\star)
    \leq\delta\operatorname{SW}(\Pi^\star),
    \qquad 0\leq\delta<1.
    \label{eq:relative-slack}
\end{equation}
Then a Nash-stable partition $\Pi^P$ satisfies
\begin{equation}
    \operatorname{SW}(\Pi^P)
    \geq(1-\delta)\operatorname{SW}(\Pi^\star),
\end{equation}
and the price of stability is at most $1/(1-\delta)$. A sufficient coalition-level condition is
\begin{equation}
    0\leq r_v(S)\leq\delta W(S)
    \qquad\text{for every nonempty }S.
    \label{eq:coalition-relative-slack}
\end{equation}
\label{cor:relative-slack}
\end{corollary}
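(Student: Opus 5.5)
The plan is to obtain all three claims from \Cref{thm:slack-welfare}, with the coalition-level sufficient condition reduced to the partition-level hypothesis \eqref{eq:relative-slack} by summation.

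\textbf{Step 1: the welfare bound.} Take $\Pi^P\in\arg\max_{\Pi}P_v(\Pi)$. By \Cref{thm:slack-welfare}, $\Pi^P$ is Nash stable and
\[
\operatorname{SW}(\Pi^P)\geq\operatorname{SW}(\Pi^\star)-R_v(\Pi^\star).
\]
Substituting \eqref{eq:relative-slack} gives
\[
\operatorname{SW}(\Pi^P)\geq(1-\delta)\operatorname{SW}(\Pi^\star).
\]

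\textbf{Step 2: the price of stability.} Since $\operatorname{SW}(\Pi^\star)>0$ and $\delta<1$, the right-hand side above is strictly positive, so $\operatorname{SW}(\Pi^P)>0$ and the welfare ratio is well defined. The best Nash-stable partition has welfare at least $\operatorname{SW}(\Pi^P)$. It also has welfare at most $\operatorname{SW}(\Pi^\star)$, because $\Pi^\star$ maximizes welfare over all partitions. Therefore
\[
\frac{\operatorname{SW}(\Pi^\star)}{\max_{\Pi\ \text{Nash stable}}\operatorname{SW}(\Pi)}\leq\frac{\operatorname{SW}(\Pi^\star)}{\operatorname{SW}(\Pi^P)}\leq\frac{1}{1-\delta}.
\]

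\textbf{Step 3: the coalition-level sufficient condition.} Assume \eqref{eq:coalition-relative-slack}. The inequality $r_v(S)\geq0$ for every nonempty $S$ is exactly weak budget feasibility, so the hypotheses of \Cref{thm:slack-welfare} hold. Summing $r_v(S)\leq\delta W(S)$ over the coalitions $S\in\Pi^\star$ gives
\[
R_v(\Pi^\star)\leq\delta\operatorname{SW}(\Pi^\star),
\]
which is \eqref{eq:relative-slack}. Steps 1 and 2 then apply. The positivity assumption $\operatorname{SW}(\Pi^\star)>0$ is still required separately, since the coalition-level condition does not supply it.

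\textbf{Expected obstacle.} None of these steps is difficult. The only point needing care is the price-of-stability convention in Step 2: one must confirm the ratio's denominator is positive, which Step 1 provides, and compare against the best stable partition, which is bounded below by $\Pi^P$.
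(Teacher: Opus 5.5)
Your proposal is correct and follows the paper's proof essentially step for step: substitute the relative-slack hypothesis into the additive bound of \Cref{thm:slack-welfare}, bound the best stable welfare below by $\operatorname{SW}(\Pi^P)$, and sum the coalition-level condition over $\Pi^\star$. Your additional checks are sound refinements the paper leaves implicit: that $\operatorname{SW}(\Pi^P)>0$ makes the ratio well defined, and that $r_v(S)\geq 0$ already supplies weak budget feasibility.
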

\begin{proof}
The welfare bound follows by substituting \eqref{eq:relative-slack} into \eqref{eq:additive-welfare-bound}. Since $\Pi^P$ is stable, the best stable welfare is at least $\operatorname{SW}(\Pi^P)$, which gives the price-of-stability bound. Summing \eqref{eq:coalition-relative-slack} over the coalitions in $\Pi^\star$ gives \eqref{eq:relative-slack}.
\end{proof}

\begin{proposition}[Asymptotic tightness of the relative-slack bound]
For every $\delta\in(0,1)$ and $\epsilon\in(0,\delta/2)$, there is a three-participant weakly budget-feasible instance with a unique welfare-optimal partition $\Pi^\star$ and a unique Nash-stable partition $\Pi^G$ such that
\[
    \operatorname{SW}(\Pi^\star)=1,
    \qquad
    R_v(\Pi^\star)=\delta,
    \qquad
    \operatorname{SW}(\Pi^G)=1-\delta+2\epsilon.
\]
Consequently, as $\epsilon$ tends to zero, the price of stability approaches $1/(1-\delta)$.
\label{prop:relative-slack-tightness}
\end{proposition}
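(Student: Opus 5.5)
The plan is to build an explicit three-participant instance in which the welfare optimum pairs participants $1$ and $2$, while participant $1$ strictly prefers to pair with participant $3$. Participant $2$ is made toxic to $3$. All singleton surpluses are zero, so pre-screening \eqref{eq:singleton-viability} holds with equality and no singleton carries slack.

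Concretely, I would set
\[
W(\{1,2\})=1,\qquad W(\{1,3\})=1-\delta+2\epsilon,\qquad W(\{2,3\})=W(\N)=0,
\]
\[
v_{12}=\tfrac{1-\delta}{2},\qquad v_{13}=\tfrac{1-\delta+2\epsilon}{2},\qquad v_{23}=-1,
\]
with box bounds \eqref{eq:pair-bounds} taken loose enough to contain these values.

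The first step is to verify weak budget feasibility \eqref{eq:pair-budget} coalition by coalition, using the slack $r_v$ from \eqref{eq:coalition-slack}:
\[
r_v(\{1,2\})=\delta,\qquad r_v(\{1,3\})=0,\qquad r_v(\{2,3\})=2,\qquad r_v(\N)=2\delta-2\epsilon.
\]
All of these are nonnegative, the last because $\epsilon<\delta/2$, and the singleton slacks are zero.

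The second step is to enumerate the five partitions of $\N$ and compute $\operatorname{SW}$ from \eqref{eq:social-welfare}. The values are $0$ for all singletons, $1$ for $\{\{1,2\},\{3\}\}$, $1-\delta+2\epsilon$ for $\{\{1,3\},\{2\}\}$, and $0$ for both $\{\{2,3\},\{1\}\}$ and $\{\N\}$. Since $\epsilon<\delta/2$ gives $1-\delta+2\epsilon<1$, the optimum $\Pi^\star=\{\{1,2\},\{3\}\}$ is unique, with $\operatorname{SW}(\Pi^\star)=1$ and $R_v(\Pi^\star)=\delta$.

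The third step, which is the delicate one, is uniqueness of the Nash-stable partition $\Pi^G=\{\{1,3\},\{2\}\}$.

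For stability of $\Pi^G$, I check every unilateral move against \eqref{eq:nash-stability}:
\begin{itemize}
\item Participant $1$ gets $v_{13}$, which beats moving to $\{2\}$ (utility $v_{12}<v_{13}$ since $\epsilon>0$) and going alone (utility $0$).
\item Participant $3$ gets $v_{13}>0$, which beats moving to $\{2\}$ (utility $v_{23}<0$) and going alone.
\item Participant $2$ gets $0$ alone, which beats joining $\{1,3\}$ (utility $v_{12}+v_{23}=\tfrac{1-\delta}{2}-1<0$).
\end{itemize}

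For instability of the other four partitions, I exhibit one strictly profitable deviation in each:
\begin{itemize}
\item In the all-singletons partition, $1$ joins $3$.
\item In $\{\{1,2\},\{3\}\}$, $1$ moves to $3$, gaining $v_{13}-v_{12}=\epsilon>0$.
\item In $\{\{2,3\},\{1\}\}$, $2$ leaves, because $v_{23}<0$.
\item In $\{\N\}$, $2$ leaves, because $v_{12}+v_{23}<0$.
\end{itemize}

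The main obstacle is only to make sure that the negative edge $v_{23}$ kills the grand coalition and the $\{2,3\}$ pair without breaking budget feasibility on $\N$. The choice $W(\N)=0$ with slack $2\delta-2\epsilon\ge0$ handles this, and this is exactly where $\epsilon<\delta/2$ is used a second time.

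Finally, $\operatorname{SW}(\Pi^G)=1-\delta+2\epsilon$. Since $\Pi^G$ is the only stable partition, the price of stability equals $1/(1-\delta+2\epsilon)$, which tends to $1/(1-\delta)$ as $\epsilon\to0$. This matches \Cref{cor:relative-slack}: the instance satisfies \eqref{eq:relative-slack} with equality at $\Pi^\star$, so the corollary's bound is attained in the limit.
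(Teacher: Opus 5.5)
Your construction is correct, and it is a genuinely different instance from the paper's. Every coalition slack is nonnegative, $\Pi^\star=\{\{1,2\},\{3\}\}$ is the unique optimum, and your five-partition check shows that $\{\{1,3\},\{2\}\}$ is the unique Nash-stable partition. One small inaccuracy: the grand-coalition slack $2\delta-2\epsilon$ needs only $\epsilon\le\delta$, so it is not really a second use of $\epsilon<\delta/2$.

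\textbf{The paper's construction.} It makes all three pair values strictly positive ($v_{12}=(1-\delta)/2$, $v_{13}=v_{23}=\epsilon/2$). Every budget constraint is tight except at the singleton $\{3\}$, where $W(\{3\})=\delta$ is retained entirely by the coordinator. The grand coalition plays the role of $\Pi^G$, hence the superscript.

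This buys two things. First, positivity of every edge makes unique Nash stability of $\{\N\}$ immediate, with no case analysis. Second, the instance has $C_v=0$, so it isolates the relative-slack loss from the negative-edge term of \Cref{thm:approximate-stable-welfare}. The inefficiency there comes purely from participant $3$ over-merging.

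\textbf{Your construction.} You keep every singleton surplus at zero and place the retained slack $\delta$ inside the productive pair itself. You then use a strongly negative edge $v_{23}=-1$ (so $C_v=1$) to destroy the grand coalition and $\{2,3\}$. The inefficiency now comes from participant $1$ chasing a slightly higher transfer with a less productive partner. This is arguably the more natural economic story: slack is retained from a productive coalition rather than from a standalone participant. The cost is an explicit enumeration and a nonzero negative-edge mass. Your $\Pi^G$ is not the grand coalition, which is fine because the statement does not require it.

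Neither instance satisfies the coalition-level condition \eqref{eq:coalition-relative-slack}. The paper's fails at $\{3\}$, where $r_v=\delta>\delta^2$. Yours fails at $\{2,3\}$ and $\N$, where $r_v>0=\delta W$.
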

\begin{proof}
Let $\N=\{1,2,3\}$, put $a=(1-\delta)/2$, and set
\[
    v_{12}=a,
    \qquad
    v_{13}=v_{23}=\epsilon/2.
\]
Define the coalition surplus by
\begin{align*}
    &W(\{1\})=W(\{2\})=0,
    &&W(\{3\})=\delta,\\
    &W(\{1,2\})=1-\delta,
    &&W(\{1,3\})=W(\{2,3\})=\epsilon,\\
    &W(\N)=1-\delta+2\epsilon.
\end{align*}
Every budget constraint holds with equality except that the singleton $\{3\}$ retains slack $\delta$. The partition $\Pi^\star=\{\{1,2\},\{3\}\}$ has welfare one and slack $\delta$. The other partition welfares are $1-\delta+2\epsilon$ for the grand coalition, $\epsilon$ for either remaining pair-plus-singleton partition, and $\delta$ for the all-singleton partition. The restrictions on $\delta$ and $\epsilon$ therefore make $\Pi^\star$ uniquely welfare optimal. All three pair values are strictly positive. Hence every nongrand partition admits a strict improvement by a singleton participant joining another coalition, while no participant in the grand coalition benefits from leaving to a singleton. Thus $\Pi^G=\{\N\}$ is uniquely Nash stable, and its welfare has the stated value.
\end{proof}

This construction establishes tightness for the partition-level hypothesis \eqref{eq:relative-slack}. It does not satisfy the stronger coalition-level sufficient condition \eqref{eq:coalition-relative-slack}; whether that condition permits a better worst-case guarantee remains open.

Budget feasibility alone does not bound efficiency because it restricts participant utility only from above.

\begin{proposition}[Unbounded welfare loss under budget feasibility alone]
For every $\rho>1$, there exists a two-participant instance with nonnegative coalition surplus and a weakly budget-feasible symmetric pairwise allocation whose unique Nash-stable partition has welfare $1$, while the welfare-optimal partition has welfare $\rho$.
\label{prop:unbounded-welfare}
\end{proposition}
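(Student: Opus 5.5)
The plan is an explicit two-participant construction. Take $\N=\{1,2\}$, so there are only two partitions: the grand coalition $\{\N\}$ and the all-singleton partition $\{\{1\},\{2\}\}$. The idea is to make the singletons carry most of the welfare while a small positive pair value forces the participants together. Concretely, set
\[
W(\{1\})=W(\{2\})=\rho/2,\qquad W(\N)=1,\qquad v_{12}=1/2 .
\]
Every coalition surplus is then nonnegative, and the welfares are $\operatorname{SW}(\{\{1\},\{2\}\})=\rho$ and $\operatorname{SW}(\{\N\})=1$. Since $\rho>1$, the singleton partition is the unique welfare optimum.

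The first check is weak budget feasibility \eqref{eq:pair-budget}. For the singletons the left side is $0\le\rho/2$. For $\N$ it is $2v_{12}=1\le W(\N)=1$, which holds with equality. So the coordinator retains slack $\rho$ at the optimum and zero at the grand coalition, consistent with \Cref{thm:slack-welfare}.

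The second check is stability via \eqref{eq:nash-stability}. In the singleton partition, participant $1$ can join $\{2\}$ and obtain $U_1^v(\N)=v_{12}=1/2>0=U_1^v(\{1\})$, so that partition is not Nash stable. In the grand coalition, leaving to $\emptyset$ gives utility $0<1/2$, and there is no other coalition to join. Hence $\{\N\}$ is Nash stable, and it is the unique Nash-stable partition. Its welfare is $1$ while the optimum is $\rho$, which gives the claim; since $\rho$ is arbitrary, the ratio is unbounded.

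There is no real obstacle here. The only point needing care is to keep $v_{12}>0$ strictly, so that stability of the grand coalition is unique, and to keep $2v_{12}\le W(\N)$. Alternatively, any $v_{12}\in(0,1/2]$ works.
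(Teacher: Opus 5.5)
Your construction is correct, but it is the mirror image of the paper's rather than the same instance. The paper takes $W(\{1\})=W(\{2\})=1/2$, $W(\N)=\rho$ and $v_{12}=-1$. There the negative pair value makes the all-singleton partition the unique Nash-stable outcome, so stability blocks a valuable merger. The budget constraint on $\N$ is slack ($-2\leq\rho$), which matches the paper's stated explanation that budget feasibility bounds participant utility only from above. You instead place the surplus in the singletons, which pairwise utilities can never reward, and use a small positive $v_{12}$. Stability then forces an inefficient merger into the grand coalition, even though the budget constraint binds there. Both arguments are equally elementary and fully verified. Your version shows that the loss does not require negative pair values or utilities that are unbounded below. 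Your instance has $v_{12}\geq0$ and a submodular $W$; for two participants submodularity reduces to $W(\{1\})+W(\{2\})\geq W(\N)$, i.e.\ $\rho\geq1$. So it sits inside the tractable regime of \Cref{prop:submodular-verification}. It also has zero retained slack at the stable partition, which illustrates that only the slack $R_v(\Pi^\star)=\rho$ at the optimum enters \Cref{thm:slack-welfare}. The paper's version, in turn, directly illustrates its ``utility bounded only from above'' remark and models under-merging rather than over-pooling.
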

\begin{proof}
Let $\N=\{1,2\}$, set $W(\{1\})=W(\{2\})=1/2$ and $W(\{1,2\})=\rho$, and choose $v_{12}=-1$. The singleton constraints satisfy $0\leq1/2$, and the pair constraint satisfies $2v_{12}=-2\leq\rho$, so the allocation is weakly budget feasible. In the grand coalition each participant obtains utility $-1$ and strictly prefers its singleton utility $0$. In the singleton partition, joining the other participant would reduce utility to $-1$. Hence the singleton partition is uniquely Nash stable and has welfare $1$, whereas the grand coalition is welfare optimal with welfare $\rho$.
\end{proof}

\Cref{prop:unbounded-welfare} shows that a finite price-of-stability or price-of-anarchy bound requires an alignment condition between transfers and primitive coalition surplus. The retained-slack conditions in \Cref{thm:slack-welfare,cor:exact-balance-pos} provide one such alignment.

\section{Incentive and Coalition Design Problems}
\subsection{Pairwise incentive design}
A basic allocation problem chooses pair values that reward desired compatibility while respecting every coalition's budget. Let $\omega_{ij}\geq0$ be design weights. We consider
\begin{align}
    \max_{v}\quad &\sum_{i<j}\omega_{ij}v_{ij}
    \label{eq:allocation-lp}\\
    \text{subject to}\quad
    &2\sum_{\{i,j\}\in E(S)}v_{ij}\leq W(S),
      &&\emptyset\neq S\subseteq\N,\notag\\
    &\underline v_{ij}\leq v_{ij}\leq\overline v_{ij},
      &&i<j.\notag
\end{align}
Unlike the conference formulation, \eqref{eq:allocation-lp} explicitly bounds pair values and does not treat feasibility itself as a nonemptiness theorem. \Cref{thm:pair-feasibility} characterizes feasibility, but both its direct test and \eqref{eq:allocation-lp} contain exponentially many coalition constraints.

There is also an exponential input problem: an unrestricted representation of $W$ requires $2^n-1$ coalition values, and estimating each $B(S)$ may require training and validating a coalition-specific model. A scalable implementation must therefore use a structured surplus model, a restricted family of reachable coalitions, or an estimator that answers coalition-value queries without training every coalition separately.

One useful structure gives a polynomial separation test. Extend $W$ by $W(\emptyset)=0$ and define $r_v(\emptyset)=0$. Recall that a set function $f:2^\N\to\R$ is submodular if
\begin{equation}
    f(A)+f(B)\geq f(A\cup B)+f(A\cap B),
    \qquad A,B\subseteq\N.
    \label{eq:submodularity}
\end{equation}

\begin{proposition}[Submodular verification of budget feasibility]
For a fixed pair allocation $v$, the coalition budget constraints hold if and only if
\begin{equation}
    \min_{S\subseteq\N}r_v(S)\geq0.
    \label{eq:slack-minimization}
\end{equation}
If $r_v$ is submodular and can be evaluated in polynomial time, budget feasibility can be verified in polynomial oracle time. In particular, if $W$ is submodular and $v_{ij}\geq0$ for every pair, then $r_v$ is submodular.
\label{prop:submodular-verification}
\end{proposition}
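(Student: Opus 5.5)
The plan has three parts, matching the three claims in the statement.

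\emph{Equivalence.} By definition, $r_v(S)=W(S)-2\sum_{\{i,j\}\in E(S)}v_{ij}$ for every nonempty $S$, so the budget constraints \eqref{eq:pair-budget} say exactly that $r_v(S)\geq0$ for all $\emptyset\neq S\subseteq\N$. The convention $r_v(\emptyset)=0$ means that including the empty set in the minimum changes nothing: the minimum over all $S\subseteq\N$ is nonnegative if and only if every nonempty $S$ has $r_v(S)\geq0$. This gives \eqref{eq:slack-minimization}.

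\emph{Polynomial verification.} I will invoke the classical result that a submodular set function given by a value oracle can be minimized in strongly polynomial oracle time. Examples are the ellipsoid method via the Lov\'asz extension (Gr\"otschel--Lov\'asz--Schrijver), and the combinatorial algorithms of Schrijver and of Iwata--Fleischer--Fujishige. Each oracle call evaluates $r_v(S)$. That costs one query to $W$ plus $O(n^2)$ arithmetic for the pair sum, so the whole procedure is polynomial in $n$ and the oracle cost. Feasibility holds if and only if the computed minimum is $\geq0$.

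\emph{Sufficient condition.} Let $g(S)=\sum_{\{i,j\}\in E(S)}v_{ij}$. Then $r_v=W-2g$, and since submodular functions form a cone and a sum of a submodular and a submodular function is submodular, it suffices to show that $g$ is supermodular when $v\geq0$. This is the step I expect to need the most care.

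First I will check that $W$ extended by $W(\emptyset)=0$ is the function assumed submodular, so that inequality \eqref{eq:submodularity} applies with $A\cap B=\emptyset$.

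For $g$, take $A,B\subseteq\N$ and classify each pair $\{i,j\}\subseteq A\cup B$:
\begin{itemize}
\item if both endpoints lie in $A\cap B$, the pair is counted twice on each side;
\item if both lie in $A$ but not both in $A\cap B$ (or symmetrically in $B$), the pair is counted once on each side;
\item if $i\in A\setminus B$ and $j\in B\setminus A$, the pair appears in $E(A\cup B)$ but in neither $E(A)$ nor $E(B)$.
\end{itemize}
Hence
\[
g(A\cup B)+g(A\cap B)-g(A)-g(B)=\sum_{i\in A\setminus B,\ j\in B\setminus A}v_{ij}\geq0,
\]
which is supermodularity of $g$. Therefore $-2g$ is submodular, and $r_v=W-2g$ is submodular. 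Combined with the previous part, budget feasibility can then be verified in polynomial oracle time.
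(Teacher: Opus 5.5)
Your proposal is correct and follows the paper's proof. You obtain the equivalence from the definition of $r_v$ together with $r_v(\emptyset)=0$, get polynomial verification from strongly polynomial submodular minimization, and deduce submodularity of $r_v=W-2g$ from supermodularity of the internal-edge function when $v\geq0$; the only cosmetic difference is that you check supermodularity via the explicit identity $g(A\cup B)+g(A\cap B)-g(A)-g(B)=\sum_{i\in A\setminus B,\,j\in B\setminus A}v_{ij}$, while the paper uses the equivalent increasing-marginal-contribution argument.
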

\begin{proof}
The first statement follows directly from \eqref{eq:coalition-slack}, with the empty set contributing zero. A submodular set function can be minimized in strongly polynomial oracle time \cite{Schrijver2000Submodular}. For the sufficient condition, the internal-edge function
\[
    g_v(S)=2\sum_{\{i,j\}\in E(S)}v_{ij}
\]
is supermodular when all $v_{ij}\geq0$: the marginal contribution of adding one endpoint of an edge weakly increases when the other endpoint is already present. Hence $-g_v$ is submodular, and $r_v=W-g_v$ is the sum of two submodular functions.
\end{proof}

\begin{corollary}[Polynomial feasibility test for nonnegative lower bounds]
Suppose $W$ is submodular and available through a polynomial-time value oracle, and $0\leq\underline v_{ij}\leq\overline v_{ij}$ for every pair. Then nonemptiness of $\calV(W)$ can be decided in polynomial oracle time by minimizing $r_{\underline v}$ and applying \Cref{thm:pair-feasibility}.
\end{corollary}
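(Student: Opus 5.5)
The corollary follows by chaining \Cref{thm:pair-feasibility} with \Cref{prop:submodular-verification}, applied to the lower-bound allocation $\underline v$ rather than to a candidate $v$.

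First, the box-ordering condition \eqref{eq:ordered-bounds} is given by hypothesis. By \Cref{thm:pair-feasibility}, nonemptiness of $\calV(W)$ is then equivalent to \eqref{eq:lower-feasibility}. That condition says exactly that $r_{\underline v}(S)\geq0$ for every nonempty $S$, where $r_{\underline v}$ is the slack \eqref{eq:coalition-slack} evaluated at $v=\underline v$. With the convention $r_{\underline v}(\emptyset)=0$, this is in turn equivalent to $\min_{S\subseteq\N}r_{\underline v}(S)\geq0$, as in \eqref{eq:slack-minimization}.

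Second, we need $r_{\underline v}$ to be submodular. The hypothesis $0\leq\underline v_{ij}$ places us in the sufficient case of \Cref{prop:submodular-verification}, with $W$ submodular and all pair values nonnegative. Hence $r_{\underline v}=W-g_{\underline v}$ is submodular. One detail needs checking: submodularity of $W$ must hold for its extension with $W(\emptyset)=0$. I read the hypothesis as referring to that extended function, consistent with how the paper sets up \Cref{prop:submodular-verification}.

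Third, we need a polynomial-time value oracle for $r_{\underline v}$. A query on $S$ consists of one call to the $W$ oracle plus the sum over $E(S)$, which has at most $\binom{n}{2}$ terms. Strongly polynomial submodular minimization \cite{Schrijver2000Submodular} then computes $\min_S r_{\underline v}(S)$ in polynomial oracle time. The decision procedure checks \eqref{eq:ordered-bounds}, which is trivially true here, and then tests whether this minimum is nonnegative.

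No step is a real obstacle. The only point needing care is that the minimization ranges over all subsets including $\emptyset$, where the value is $0$. A nonnegative minimum over all subsets therefore certifies the constraints for every nonempty coalition. Conversely, if some nonempty coalition has negative slack, the minimum is negative and the procedure correctly reports infeasibility.
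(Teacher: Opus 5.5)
Your proposal is correct and takes the same route the corollary's statement indicates. \Cref{thm:pair-feasibility} reduces nonemptiness of $\calV(W)$ to budget feasibility of $\underline v$, that is, to $\min_{S\subseteq\N}r_{\underline v}(S)\geq0$. The sufficient case of \Cref{prop:submodular-verification} (submodular $W$ and nonnegative pair values) then makes $r_{\underline v}$ submodular, so strongly polynomial submodular minimization decides feasibility. The two points you handle with care---reading submodularity of $W$ for its extension with $W(\emptyset)=0$, and noting that the empty set contributes value zero to the minimization---match the conventions the paper sets up before \Cref{prop:submodular-verification}.
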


The result is an oracle-complexity statement. It removes explicit enumeration of all coalition constraints but does not remove the statistical cost of constructing or querying a valid surplus model. For nonnegative pair bounds, the same minimization supplies a violated-coalition separation oracle for \eqref{eq:allocation-lp}. Under the usual rational-encoding assumptions, standard separation-based linear optimization then gives a polynomial oracle-time solution of the allocation LP, provided that $W$ itself has a polynomial-time value oracle.

Submodular $W$ is plausible in data-redundant regimes where benefit is concave in an additive effective-sample-size statistic and operating and participant costs are modular. It can fail in the complementary regimes that motivate clustering, for example when adding a participant completes missing label or feature coverage and creates increasing returns. The exact requirement in \Cref{prop:submodular-verification} is submodularity of $r_v$; the sufficient condition on $W$ is therefore a tractability frontier rather than a universal assumption about federated learning.

\subsection{Stable welfare maximization}
For a fixed allocation $v$, a system-level objective is
\begin{align}
    \max_{\Pi\in\Part(\N)}\quad
        &\sum_{S\in\Pi}W(S)
        \label{eq:stable-welfare}\\
    \text{subject to}\quad
        &\Pi\text{ is Nash stable under }U^v.\notag
\end{align}
This formulation directly connects the stable partition to learning benefit and economic cost. It replaces the earlier objective of minimizing the coalition value itself. \Cref{thm:slack-welfare} bounds the loss of the best potential-maximizing stable partition when coordinator-retained surplus is bounded. Joint design chooses both $v\in\calV(W)$ and a stable $\Pi$, producing a bilevel or mixed-integer optimization problem.

The journal version will study the following questions:
\begin{enumerate}[leftmargin=*]
    \item Beyond the submodular-slack case, which structured surplus estimators admit practical separation or reduced constraint families?
    \item Which FL-specific learning and cost assumptions imply a small relative-slack factor $\delta$ or small negative-edge mass $C_v$?
    \item Can the approximation-and-stabilization pipeline be given a polynomial convergence bound without enumerating all coalition values?
\end{enumerate}

\subsection{Correlation-clustering representation}
For a partition $\Pi$, define its weighted agreement objective as
\begin{equation}
\begin{split}
    A_v(\Pi)
    ={}&\sum_{\substack{i<j:\ S_\Pi(i)=S_\Pi(j),\ v_{ij}\geq0}}v_{ij}
    +\sum_{\substack{i<j:\ S_\Pi(i)\neq S_\Pi(j),\ v_{ij}<0}}|v_{ij}|.
\end{split}
\label{eq:max-agreement}
\end{equation}
This is the weighted maximum-agreement correlation-clustering objective: positive edges are rewarded inside coalitions and negative edges are rewarded across coalitions \cite{BansalBlumChawla2004,Swamy2004}.

\begin{proposition}[Correlation-clustering equivalence]
Let
\begin{equation}
    C_v=\sum_{\substack{i<j:\ v_{ij}<0}}|v_{ij}|.
\end{equation}
For every partition $\Pi$,
\begin{equation}
    A_v(\Pi)=C_v+P_v(\Pi).
    \label{eq:agreement-potential}
\end{equation}
Consequently, global maximization of $P_v$ and weighted maximum-agreement correlation clustering have the same optimal partitions.
\label{prop:correlation-equivalence}
\end{proposition}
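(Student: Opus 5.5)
The plan is to verify \eqref{eq:agreement-potential} by splitting the pairs $i<j$ according to the sign of $v_{ij}$ and whether the two participants share a coalition in $\Pi$. Using \eqref{eq:potential}, write the potential as a sum over pairs:
\[
P_v(\Pi)=\sum_{i<j:\ S_\Pi(i)=S_\Pi(j)}v_{ij}.
\]
This sum splits into a nonnegative-edge part $P_v^{+}(\Pi)$, over internal pairs with $v_{ij}\geq0$, and a negative-edge part $P_v^{-}(\Pi)$, over internal pairs with $v_{ij}<0$. The first term of $A_v(\Pi)$ in \eqref{eq:max-agreement} is exactly $P_v^{+}(\Pi)$.

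For the second term, partition the negative edges into those cut by $\Pi$ and those internal to $\Pi$. This gives
\[
\sum_{i<j:\ S_\Pi(i)\neq S_\Pi(j),\ v_{ij}<0}|v_{ij}|
= C_v-\sum_{i<j:\ S_\Pi(i)=S_\Pi(j),\ v_{ij}<0}|v_{ij}|
= C_v+P_v^{-}(\Pi).
\]
The last equality uses $|v_{ij}|=-v_{ij}$ for $v_{ij}<0$. Adding the two pieces yields $A_v(\Pi)=C_v+P_v^{+}(\Pi)+P_v^{-}(\Pi)=C_v+P_v(\Pi)$.

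The consequence follows because $C_v$ depends only on $v$ and not on $\Pi$. The maps $\Pi\mapsto A_v(\Pi)$ and $\Pi\mapsto P_v(\Pi)$ therefore differ by a constant on the finite set $\Part(\N)$, so they have the same argmax sets.

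There is no real obstacle here. The only care needed is bookkeeping: pairs with $v_{ij}=0$ should be counted consistently among the nonnegative edges, and each unordered pair should be counted exactly once, as the notation $E(S)$ and $i<j$ already ensures.
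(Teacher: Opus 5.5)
Your proof is correct and follows the same route as the paper: you match the nonnegative internal edges directly, and you write the cut negative edges as $C_v$ minus the absolute weight of the internal negative edges, which equals $C_v$ plus their negative contribution to $P_v$. Your constant-shift argument for the equality of the argmax sets spells out a step the paper leaves implicit.
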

\begin{proof}
The contribution of positive edges inside coalitions is identical in $A_v$ and $P_v$. For negative edges, $C_v$ counts the absolute weight of every negative edge. Subtracting the absolute weight of negative edges placed inside coalitions leaves exactly the absolute weight of negative edges placed across coalitions. Since an internal negative edge contributes its negative weight to $P_v$, the identity follows.
\end{proof}

The NP-hardness of maximum utilitarian welfare in symmetric additively separable hedonic games is already known \cite{AzizBrandtSeedig2013}. The equivalence above gives an alternative representation of the same optimization barrier. Its role here is to import correlation-clustering approximation algorithms \cite{BansalBlumChawla2004,Swamy2004} and connect them to stable post-processing.

\begin{proposition}[Stability-preserving post-processing]
Starting from any partition $\Pi^0$, apply arbitrary strict better-response moves until none remains. The process terminates at a Nash-stable partition $\widehat\Pi$ satisfying
\begin{equation}
    A_v(\widehat\Pi)\geq A_v(\Pi^0).
\end{equation}
If $\Pi^0$ is an $\alpha$-approximation for weighted maximum agreement, then $\widehat\Pi$ retains the same approximation guarantee.
\label{prop:stable-postprocessing}
\end{proposition}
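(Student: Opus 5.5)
The argument combines the finite improvement property (\Cref{cor:fip}) with the identity $A_v=C_v+P_v$ from \Cref{prop:correlation-equivalence}. We work with the symmetric pairwise allocation $v$ and the induced utilities $U^v$ throughout.

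First, termination and stability. Each strict better-response move strictly increases the deviator's utility. By \Cref{thm:exact-potential}, it therefore strictly increases $P_v$ by the same amount. Because $\Part(\N)$ is finite, no partition can recur, so the process stops after finitely many moves. This is exactly \Cref{cor:fip}. At the terminal partition $\widehat\Pi$ no strict profitable unilateral move to any $T\in\widehat\Pi\cup\{\emptyset\}$ remains, which is condition \eqref{eq:nash-stability}. Hence $\widehat\Pi$ is Nash stable.

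Second, monotonicity of the agreement objective. Let $\Pi^0,\Pi^1,\ldots,\Pi^K=\widehat\Pi$ be the trajectory. The increments are strictly positive, so $P_v(\Pi^{k+1})>P_v(\Pi^k)$ for each $k$, and by telescoping $P_v(\widehat\Pi)\geq P_v(\Pi^0)$, with equality only when $K=0$. The constant $C_v$ depends only on $v$ and not on the partition. Equation \eqref{eq:agreement-potential} therefore gives
\[
A_v(\widehat\Pi)=C_v+P_v(\widehat\Pi)\geq C_v+P_v(\Pi^0)=A_v(\Pi^0).
\]

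Third, preservation of the approximation ratio. Let $A_v^{\mathrm{opt}}=\max_{\Pi}A_v(\Pi)$. Suppose $\Pi^0$ is an $\alpha$-approximation, that is, $A_v(\Pi^0)\geq\alpha A_v^{\mathrm{opt}}$ for a maximization convention with $\alpha\in(0,1]$. Then $A_v(\widehat\Pi)\geq A_v(\Pi^0)\geq\alpha A_v^{\mathrm{opt}}$.

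One point needs care. The ratio must be stated for $A_v$, not for $P_v$. The additive shift by $C_v$ preserves differences but not ratios, and $P_v$ may be negative. Since $A_v\geq0$ always holds, the multiplicative guarantee is meaningful for $A_v$. The rest is routine.
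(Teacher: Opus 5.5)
Your proof is correct and follows the paper's argument. Termination and Nash stability come from \Cref{cor:fip}, the strict increase of $P_v$ along the trajectory comes from \Cref{thm:exact-potential}, and the constant shift $A_v=C_v+P_v$ transfers this increase to $A_v$, which preserves the $\alpha$-guarantee; your remark that the multiplicative ratio must be stated for the nonnegative objective $A_v$ rather than for $P_v$ makes explicit what the paper leaves implicit.
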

\begin{proof}
Termination and Nash stability follow from \Cref{cor:fip}. Every strict better response increases $P_v$ by \Cref{thm:exact-potential}, and therefore increases $A_v$ by the constant-shift identity \eqref{eq:agreement-potential}. The final objective is at least the initial objective, so any multiplicative approximation guarantee is preserved.
\end{proof}

\begin{theorem}[Welfare guarantee for approximation and stabilization]
Let $v$ be weakly budget feasible, let
\[
    P_v^{\max}=\max_{\Pi\in\Part(\N)}P_v(\Pi),
\]
and let $\Pi^\star$ maximize social welfare. Suppose $\Pi^0$ is an $\alpha$-approximation for weighted maximum agreement, where $\alpha\in[0,1]$, and let $\widehat\Pi$ be the Nash-stable partition obtained by the post-processing in \Cref{prop:stable-postprocessing}. Then
\begin{equation}
    \operatorname{SW}(\widehat\Pi)
    \geq
    \alpha\operatorname{SW}(\Pi^\star)
    -\alpha R_v(\Pi^\star)
    -2(1-\alpha)C_v.
    \label{eq:approximate-stable-welfare}
\end{equation}
\label{thm:approximate-stable-welfare}
\end{theorem}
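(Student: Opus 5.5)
The plan is to chain the welfare decomposition \eqref{eq:welfare-decomposition}, the constant-shift identity \eqref{eq:agreement-potential}, and the agreement-preservation property of \Cref{prop:stable-postprocessing}.

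First, since $\widehat\Pi$ is a partition and $v$ is weakly budget feasible, $R_v(\widehat\Pi)\geq0$. Then \eqref{eq:welfare-decomposition} gives $\operatorname{SW}(\widehat\Pi)\geq 2P_v(\widehat\Pi)$. By \Cref{prop:stable-postprocessing}, $A_v(\widehat\Pi)\geq A_v(\Pi^0)$. The $\alpha$-approximation hypothesis gives $A_v(\Pi^0)\geq\alpha A_v^{\max}$. Since $A_v$ and $P_v$ differ by the constant $C_v$, we have $A_v^{\max}=C_v+P_v^{\max}$.

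Combining these facts,
\[
P_v(\widehat\Pi)=A_v(\widehat\Pi)-C_v\geq\alpha(C_v+P_v^{\max})-C_v=\alpha P_v^{\max}-(1-\alpha)C_v .
\]
Doubling this yields $\operatorname{SW}(\widehat\Pi)\geq 2\alpha P_v^{\max}-2(1-\alpha)C_v$.

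Next, bound $P_v^{\max}$ from below by the potential of the welfare optimum: $P_v^{\max}\geq P_v(\Pi^\star)$. Applying \eqref{eq:welfare-decomposition} to $\Pi^\star$ gives $2P_v(\Pi^\star)=\operatorname{SW}(\Pi^\star)-R_v(\Pi^\star)$. Because $\alpha\geq0$, multiplying by $\alpha$ preserves the inequality, so $2\alpha P_v^{\max}\geq\alpha\operatorname{SW}(\Pi^\star)-\alpha R_v(\Pi^\star)$. Substituting this into the previous bound gives \eqref{eq:approximate-stable-welfare}.

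The main point needing care is the sign bookkeeping in the conversion from the agreement guarantee to a potential guarantee. The multiplicative guarantee applies to $A_v$, which is nonnegative, and not to $P_v$, which may be negative. The shift by $C_v$ is what produces the $(1-\alpha)C_v$ correction, and it must be subtracted in the right direction. A second point is the use of $\alpha\geq0$ when multiplying the lower bound on $P_v^{\max}$, which may be negative. Beyond this, the argument is a direct chain of inequalities and needs no case split.
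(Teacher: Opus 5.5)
Your proposal is correct and follows essentially the same chain as the paper's proof. You use agreement preservation plus the constant shift $A_v=C_v+P_v$ to get $P_v(\widehat\Pi)\geq\alpha P_v^{\max}-(1-\alpha)C_v$, then drop $R_v(\widehat\Pi)\geq0$ and bound $P_v^{\max}\geq P_v(\Pi^\star)$ via the welfare decomposition. You also explicitly flag the use of $\alpha\geq0$, which the paper leaves implicit.
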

\begin{proof}
By \Cref{prop:stable-postprocessing,prop:correlation-equivalence},
\begin{align*}
    C_v+P_v(\widehat\Pi)
    &=A_v(\widehat\Pi)\\
    &\geq\alpha\max_{\Pi\in\Part(\N)}A_v(\Pi)
     =\alpha(C_v+P_v^{\max}),
\end{align*}
and hence
\[
    P_v(\widehat\Pi)
    \geq\alpha P_v^{\max}-(1-\alpha)C_v.
\]
Budget feasibility gives $R_v(\widehat\Pi)\geq0$. Since $P_v^{\max}\geq P_v(\Pi^\star)$, the welfare decomposition yields
\begin{align*}
    \operatorname{SW}(\widehat\Pi)
    &\geq2P_v(\widehat\Pi)\\
    &\geq2\alpha P_v(\Pi^\star)-2(1-\alpha)C_v\\
    &=\alpha\operatorname{SW}(\Pi^\star)
      -\alpha R_v(\Pi^\star)-2(1-\alpha)C_v.
\end{align*}
\end{proof}

\Cref{thm:approximate-stable-welfare} separates loss caused by coordinator-retained slack at the welfare optimum from loss caused by negative pair mass. The bound can be weak when either term is large, which makes both quantities important empirical diagnostics. The strict improvement phase can still require exponentially many moves. The post-processing and welfare results are objective guarantees, not polynomial-time guarantees for exact Nash stability.

\begin{proposition}[Tightness of the negative-edge correction]
For every $b>0$ and $a\geq2b$, there is an exactly budget-balanced four-participant instance in which a Nash-stable partition is an $\alpha$-approximation for weighted maximum agreement, with $\alpha=a/(a+2b)$, and equality holds in \eqref{eq:approximate-stable-welfare}. As $a$ and $b$ vary, the construction attains every $\alpha\in[1/2,1)$.
\label{prop:negative-edge-tightness}
\end{proposition}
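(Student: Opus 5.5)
The plan is to build the instance directly from pair values and then define $W$ by exact budget balance. I will take $W(S)=2\sum_{\{i,j\}\in E(S)}v_{ij}$ for every nonempty $S$, so that $W(\{i\})=0$ and $r_v\equiv0$. Exact balance gives $R_v\equiv0$ and $\operatorname{SW}=2P_v$. Hence $\Pi^\star$ is a maximizer of $P_v$, and the right-hand side of \eqref{eq:approximate-stable-welfare} becomes $2\alpha P_v^{\max}-2(1-\alpha)C_v$. By \Cref{prop:correlation-equivalence}, equality in \eqref{eq:approximate-stable-welfare} then reduces to $A_v(\widehat\Pi)=\alpha\max_\Pi A_v(\Pi)$. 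So it suffices to exhibit a Nash-stable partition whose agreement is exactly an $\alpha$-fraction of the optimum.

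Concretely, on $\N=\{1,2,3,4\}$ I will set
\[
v_{12}=v_{34}=a/2,\qquad v_{13}=v_{24}=-b,\qquad v_{14}=v_{23}=0 .
\]
Then $C_v=2b$, and the agreement objective has the following values.
\begin{itemize}[leftmargin=*]
\item The partition $\{\{1,2\},\{3,4\}\}$ keeps every positive edge inside and every negative edge across. It attains the trivial upper bound $a+2b$ on $A_v$, so it is optimal.
\item The grand coalition $\{\N\}$ has $A_v=a$, which gives $\alpha=a/(a+2b)$.
\end{itemize}

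Next, I will check that $\{\N\}$ is Nash stable. The only available deviation is leaving to a singleton, since no other coalition exists. By symmetry, each participant's utility in $\N$ is $a/2-b\geq0=U_i(\{i\})$, and this uses the hypothesis $a\geq2b$. Taking $\widehat\Pi=\{\N\}$ is legitimate under \Cref{prop:stable-postprocessing}: start from $\Pi^0=\{\N\}$, and note that no strict better response exists. Hence $\widehat\Pi=\Pi^0$.

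Finally, I will verify equality by direct computation. We have $P_v(\{\N\})=a-2b$, so $\operatorname{SW}(\widehat\Pi)=2a-4b$. We also have $\operatorname{SW}(\Pi^\star)=2a$. The right-hand side is
\[
\frac{2a^2}{a+2b}-\frac{8b^2}{a+2b}=2(a-2b),
\]
so equality holds. For the range of $\alpha$, note that $a/(a+2b)$ is continuous and increasing in $a/b$. It equals $1/2$ at $a=2b$ and tends to $1$ as $a/b\to\infty$, so it attains every $\alpha\in[1/2,1)$.

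The only delicate step is the choice of weights, that is, making the stable partition lose exactly the negative-edge mass while remaining stable. The constraint $a\geq2b$ is exactly what makes the grand coalition individually rational. Once the weights are fixed, everything else is bookkeeping.
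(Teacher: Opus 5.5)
Your proof is correct and uses the same construction as the paper. You define $W$ by exact balance from the pair values, take two positive pairs joined by negative cross edges, and note that the grand coalition is Nash stable (this is where $a\geq 2b$ enters) with agreement ratio exactly $a/(a+2b)$. The weights differ only cosmetically: the paper uses $v_{12}=v_{34}=a$ with all four cross pairs at $-b$ (so $C_v=4b$), whereas your halved positive weights and two zero cross edges give $C_v=2b$ with the same ratio and stability threshold. Your observation that, under exact balance, equality reduces to $A_v(\widehat\Pi)=\alpha\max_\Pi A_v(\Pi)$ is a tidy shortcut the paper leaves implicit, and you also verify the range $\alpha\in[1/2,1)$, which the paper's proof does not spell out.
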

\begin{proof}
Let $\N=\{1,2,3,4\}$, set $v_{12}=v_{34}=a$, and give each of the four cross-group pairs in $\{1,2\}\times\{3,4\}$ value $-b$. Define
\[
    W(S)=2\sum_{\{i,j\}\in E(S)}v_{ij}
    \qquad\text{for every nonempty }S.
\]
Thus $R_v(\Pi)=0$ for every partition. The partition $\Pi^\star=\{\{1,2\},\{3,4\}\}$ attains potential $2a$, the sum of all positive edge weights, and is therefore globally potential and welfare optimal. It also attains the maximum possible agreement value $2a+4b$. The grand coalition $\Pi^G=\{\N\}$ has potential $2a-4b$ and agreement value $2a$, so its exact agreement ratio is
\[
    \alpha=\frac{2a}{2a+4b}=\frac{a}{a+2b}.
\]
Every participant's utility in $\Pi^G$ is $a-2b\geq0$. A unilateral departure can only create a singleton with utility zero, so $\Pi^G$ is Nash stable and stabilization can leave it unchanged. Since $C_v=4b$, exact balance gives
\begin{align*}
    \operatorname{SW}(\Pi^G)
    &=4a-8b\\
    &=\alpha\operatorname{SW}(\Pi^\star)
      -2(1-\alpha)C_v,
\end{align*}
which is equality in \eqref{eq:approximate-stable-welfare} because $R_v(\Pi^\star)=0$.
\end{proof}

The construction uses net-negative surplus for some incompatible cross-group coalitions, which is permitted by the net-surplus model. It establishes worst-case tightness under the stated assumptions; additional restrictions such as nonnegative surplus for every coalition may yield stronger guarantees.

\section{Decentralized Coalition Formation}
Give every participant a label in $\Sigma=\{1,\ldots,n\}$. Participants with the same label form a coalition. A label profile $\bm{\sigma}\in\Sigma^n$ induces the partition
\begin{equation}
    \Pi(\bm{\sigma})=
    \left\{
    \{j\in\N:\sigma_j=s\}:s\in\Sigma
    \right\}\setminus\{\emptyset\}.
\end{equation}
Unused labels allow a participant to form a singleton coalition.

Starting from any label profile, a strict better-response procedure repeatedly selects a participant and moves it to a label that strictly raises \eqref{eq:additive-utility}. By \Cref{cor:fip}, the procedure terminates in finitely many moves. The result can depend on the initial partition and update order because different local maxima of $P_v$ may exist.

If destination consent is required, the move is executed only when $v_{ij}\geq0$ for every member $j$ of the destination coalition. By \Cref{prop:individual-stability}, this accepted better-response process terminates at an individually stable partition. The Nash procedure gives the stronger no-veto benchmark, while the accepted procedure models voluntary admission.

For implementation, participant $i$ needs the sum of its pair values with the members of each current coalition. The baseline is therefore semi-decentralized: the coordinator publishes current coalition memberships and privately or publicly supplies each participant's relevant pair scores. Given this information, all destination utilities can be evaluated in $O(n)$ arithmetic operations per participant. This observation does not bound the number of improvement steps, which may be exponential in the worst case. Establishing convergence rates under structured pair values is left for the full analysis.

\section{Empirical Evaluation}
\label{sec:empirical}

\subsection{Experimental design and provenance}

The empirical study tests whether the proposed allocation and decentralized adjustment procedure produces affordable, stable, and welfare-efficient coalitions when coalition values are estimated from federated learning runs. CIFAR-10 is the sole reported task. An earlier MNIST pilot was used only to validate the implementation and calibrate the protocol; because it was exploratory, its outcomes are excluded from the comparisons below. We preregistered the CIFAR-10 protocol before the substantive runs and retained all five analysis seeds, including design-infeasible cases.

Each instance has $n=4$ participants. Label proportions follow a Dirichlet distribution with concentration $0.3$, participant sample sizes lie between 800 and 3200, and communication probabilities lie between $0.60$ and $0.95$. A coalition trains a compact convolutional network with approximately 137,000 parameters for $R=20$ rounds, one local epoch per round, minibatches of size 32, and SGD learning rate $0.05$. All coalitions start from the same initialization. Failed uploads follow the Bernoulli model in \eqref{eq:training-probability-space}; when no update arrives, the safe aggregation rule in \eqref{eq:safe-aggregation} leaves the coalition model unchanged. The observed empty-round rates were $20.7\%$ for singletons, $4.1\%$ for pairs, $0.9\%$ for triples, and $0.2\%$ for the grand coalition.

For commensurable economic units, quality is validation accuracy and the benefit map is
\begin{equation}
    B_\lambda(S)
    =\lambda |S|\,
      \E\!\left[
        \left[Q_S(\btheta_S^R)-Q^{\mathrm{ref}}\right]_+
      \right],
    \qquad Q^{\mathrm{ref}}=0.10,
    \label{eq:empirical-benefit-map}
\end{equation}
where $[x]_+=\max\{x,0\}$. Computation and communication costs use the same currency unit and are multiplied by $\gamma$. The benign primary calibration is $(\lambda,\gamma)=(10,1)$; the lean co-primary calibration is $(2,1)$. A sensitivity grid uses $\lambda\in\{2,5,10,20,50\}$ and $\gamma\in\{0.5,1,2\}$. The main pairwise-validation-gain (PVG) mechanism uses five repetitions per coalition. A disjoint 20-repetition PVG estimate supplies the high-replication benchmark. We also evaluate gradient alignment (GA), a transfer-scale arm that multiplies positive-pair caps by $\beta=50$, and the following baselines: grand-coalition FedAvg \cite{McMahanEtAl2017FedAvg}, independent local training, IFCA with two clusters \cite{GhoshEtAl2020IFCA}, and equal-surplus sharing. Under equal-surplus sharing, each member receives $U_i^{\mathrm{eq}}(S)=W(S)/|S|$, so the coalition surplus is divided exactly and the coordinator retains zero. This rule is exactly budget balanced but need not be pairwise representable and therefore has no potential-game guarantee.

For each of seeds 201--205, the experiment evaluates all 15 nonempty coalitions using five mechanism repetitions and 20 disjoint benchmark repetitions, together with five IFCA runs. This gives 1,875 coalition-training runs and 25 IFCA runs, with no failure or skipped run. Exact enumeration over the 15 partitions of four participants certifies welfare and stability relative to each estimated value table. Thus, throughout this section, \emph{certified} means exact for the stated estimated table, not exact for an unknown population value function. Sampling uncertainty is evaluated separately by a preregistered $B=200$ bootstrap that jointly resamples coalition quality and participant accuracy.

The fresh-run provenance chain is fully enforced. Every raw record carries the pre-run commitment; the final raw seal and the versioned publication manifest are externally anchored; and independent verification regenerates the processed results from sealed raw evidence and checks the exact published inventory. The complete chain for namespace \texttt{cifar-essential} passes with zero failures.

\subsection{Primary welfare and stability results}

\Cref{tab:cifar-welfare} reports the benign primary cell on the five-repetition mechanism value table. The mechanism attains the certified welfare optimum for every seed, so the empirical price of stability is exactly one in all five instances. The 95\% confidence interval for mean optimal welfare is $[19.540,21.902]$. Among baselines defined on all five seeds, local training is strongest on average, followed by IFCA evaluated on its own cluster models. Grand-coalition FedAvg is $26.4\%$ below the optimum on average, demonstrating the anti-pooling regime that motivates coalition formation. The independent 20-repetition benchmark-table evaluation is reported separately in the welfare-bound and budget audit below.

\begin{table}[t]
    \centering
    \small
    \caption{Social welfare on the five-repetition mechanism value table at $(\lambda,\gamma)=(10,1)$ across five CIFAR-10 seeds. Values are mean $\pm$ sample standard deviation. The equal-surplus row is conditional on the two seeds for which a Nash-stable partition exists and is therefore not directly comparable with the five-seed rows.}
    \label{tab:cifar-welfare}
    \begin{tabular}{@{}lccc@{}}
        \toprule
        Method & $n$ & Mean $\pm$ sd & 95\% CI \\
        \midrule
        Certified welfare optimum & 5 & $20.721\pm0.951$ & $[19.540,21.902]$ \\
        Pairwise mechanism and better response & 5 & $20.721\pm0.951$ & $[19.540,21.902]$ \\
        Independent local training & 5 & $20.117\pm1.564$ & $[18.174,22.059]$ \\
        IFCA, own cluster models & 5 & $18.233\pm1.505$ & $[16.365,20.102]$ \\
        IFCA partition, common value table & 5 & $17.321\pm2.050$ & $[14.775,19.866]$ \\
        Grand-coalition FedAvg & 5 & $15.259\pm1.963$ & $[12.821,17.696]$ \\
        Equal-surplus best stable, if it exists & 2 & $21.140\pm0.367$ & $[17.844,24.436]$ \\
        \bottomrule
    \end{tabular}
\end{table}

The optimum is all singletons for seeds 202--204. For seeds 201 and 205 it consists of one productive pair and two singletons because one participant has much lower standalone surplus than the others. The designed pair values recover both regimes. In every seed the Nash-stable and individually stable sets each contain a unique partition, that partition equals the welfare optimum, and strict better response reaches it from all nine Nash-dynamics endpoints tested per cell. These comprise seven preregistered initializations and two additional update-order runs from the grand-coalition initialization. Convergence takes at most four moves, with a mean of $1.53$ moves. \Cref{fig:cifar-moves} disaggregates convergence length by initialization and dynamics kind: the grand-coalition initialization is farthest from equilibrium, Nash and destination-consent dynamics behave almost identically, and the greedy-agreement and potential-optimum initializations start at the certified optimum and take zero moves.

\begin{figure}[t]
    \centering
    \includegraphics[width=\linewidth,trim=0 0 0 0.36in,clip]{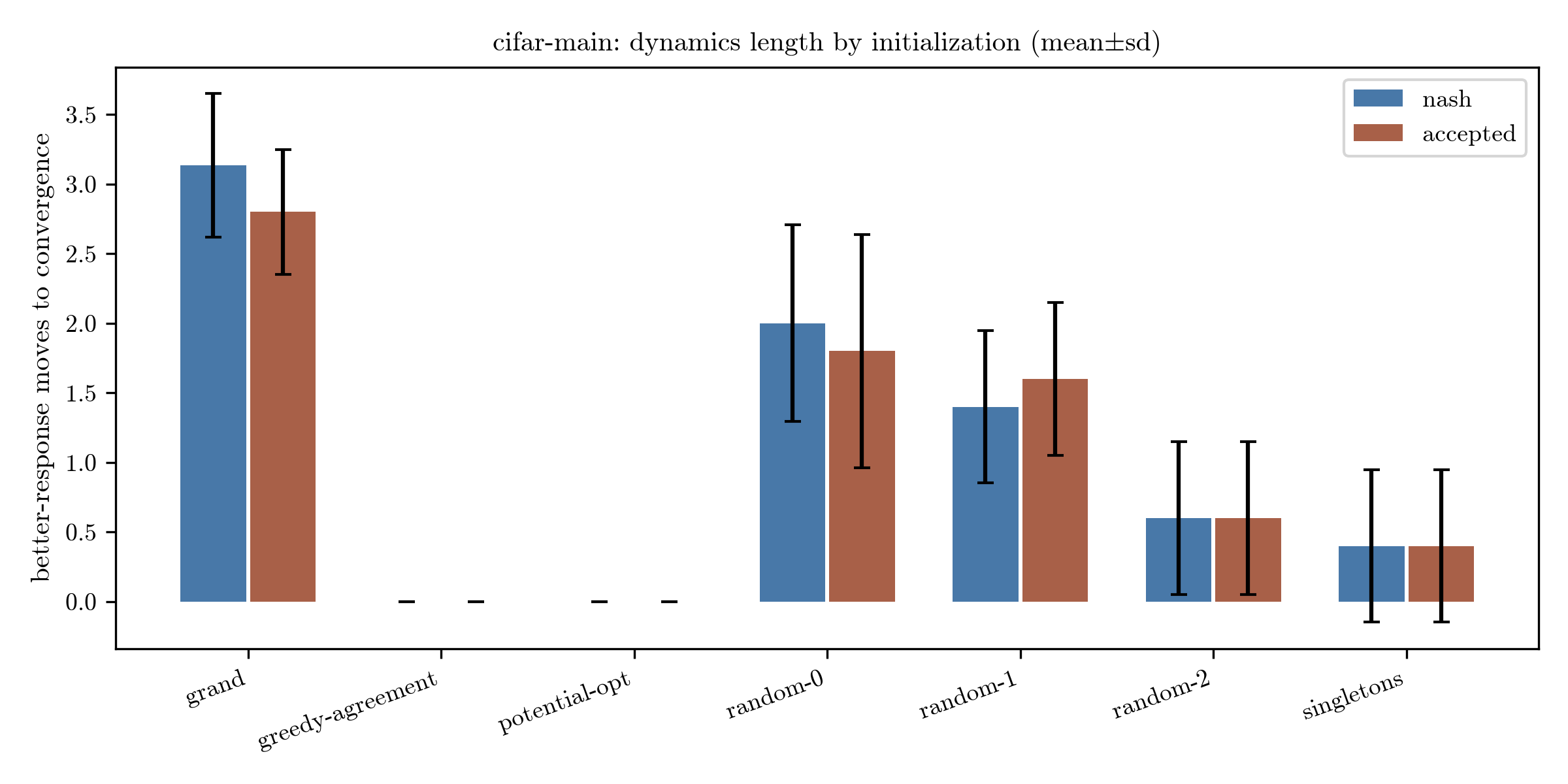}
    \caption{Better-response moves to convergence by initialization and dynamics kind (Nash and destination-consent accepted better response) at the benign cell, mean $\pm$ sample standard deviation over five seeds. The greedy-agreement and potential-optimum initializations take zero moves.}
    \label{fig:cifar-moves}
\end{figure}

This existence result is not shared by the allocation baselines. Equal-surplus sharing has a Nash-stable outcome on only two seeds. On the other three, exhaustive enumeration finds an empty Nash-stable set, and all seven tested strict better-response trajectories cycle. The comparison isolates the practical role of the symmetric pairwise structure: the exact potential guarantees existence and finite convergence, while an apparently natural surplus split does not.

\subsection{Stability costs and retained slack}

The sensitivity grid in \Cref{fig:cifar-grid} separates the benign regime from a lean, cost-sensitive regime. At $(\lambda,\gamma)=(2,1)$, singleton pre-screening certifies design infeasibility for seeds 201 and 205 because participant 2 has negative singleton surplus. The remaining three seeds are feasible but have prices of stability $1.291$, $1.225$, and $1.038$, with relative slack $\delta$ between $1.243$ and $2.317$. Stability can therefore cost up to $22.5\%$ of the certified optimum in the feasible lean instances. This does not contradict the primary result; it shows that the welfare alignment observed at $(10,1)$ is calibration-dependent, as predicted by the worst-case theory.

\begin{figure}[t]
    \centering
    \includegraphics[width=\linewidth,trim=0 0 0 0.38in,clip]{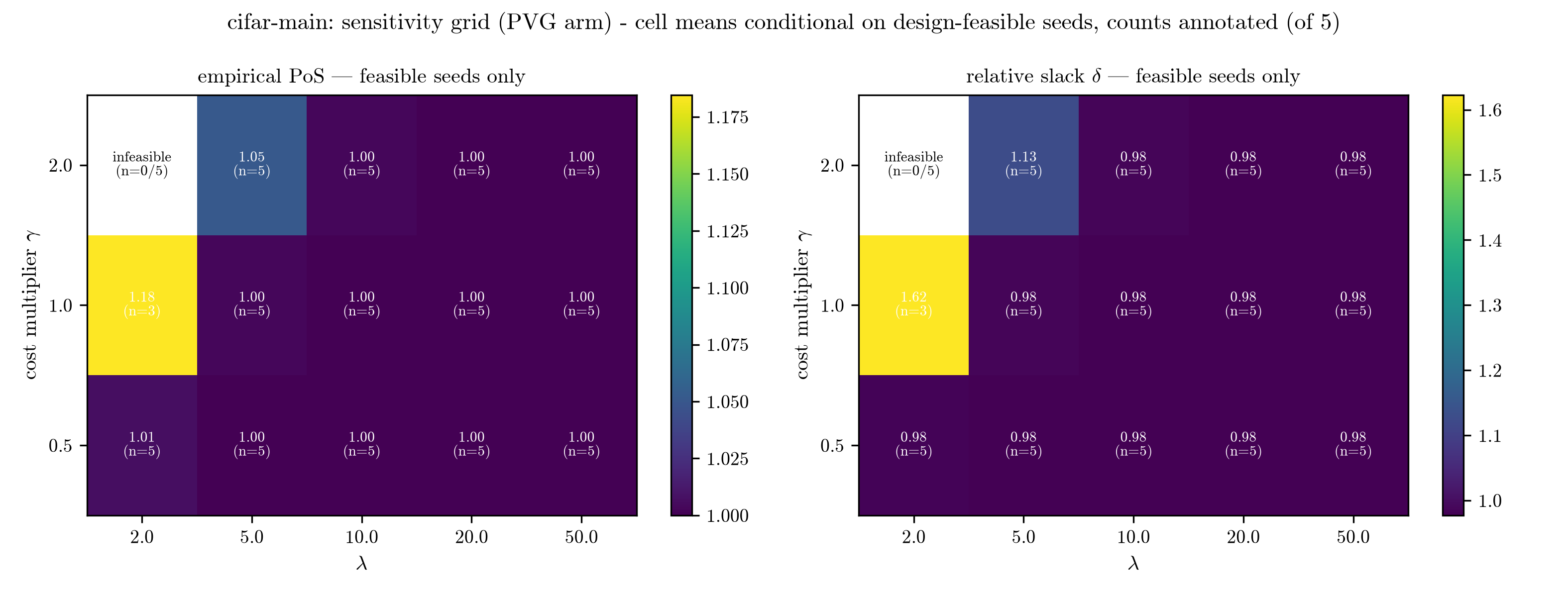}
    \caption{Sensitivity of the empirical price of stability and relative slack for the PVG mechanism. Each entry is the mean over design-feasible seeds, and the annotation gives the number of feasible seeds out of five. Empty cells failed the preregistered singleton feasibility screen.}
    \label{fig:cifar-grid}
\end{figure}

The $\beta=50$ transfer-scale arm probes whether richer transfers can reduce coordinator-retained slack. At the benign cell, it lowers $\delta$ from $0.921$ to $0.517$ for seed 201 and from $0.964$ to $0.538$ for seed 205. The relative-slack guarantee in \Cref{cor:relative-slack} is therefore non-vacuous when the optimum contains a pair. Richer transfers do not automatically improve alignment, however. For seed 201, the transfer-scale arm moves the stable partition away from the optimum and raises the price of stability from $1.000$ to $1.183$; for seed 205, it preserves price of stability one. This is an empirical instance of the distinction between distributing surplus to participants and aligning their potential with social welfare.

\subsection{Pair estimation and sampling uncertainty}

The signs of pair values determine whether a destination coalition accepts an entrant and are therefore more consequential than magnitude error alone. Against the disjoint 20-repetition PVG benchmark, five-repetition PVG obtains the correct sign on 27 of 30 participant pairs, with three false positives and no false negatives. GA obtains 18 of 30 signs, with 12 false positives and no false negatives. Every benchmark pair is sign-confident at the CIFAR-10 effect sizes. PVG's seed-level Pearson correlation with the benchmark ranges from $0.759$ to $0.945$, whereas GA ranges from $0.211$ to $0.909$. \Cref{fig:cifar-estimators} shows that GA is systematically over-optimistic about several benchmark-negative pairs.

\begin{figure}[t]
    \centering
    \includegraphics[width=\linewidth,trim=0 0 0 0.36in,clip]{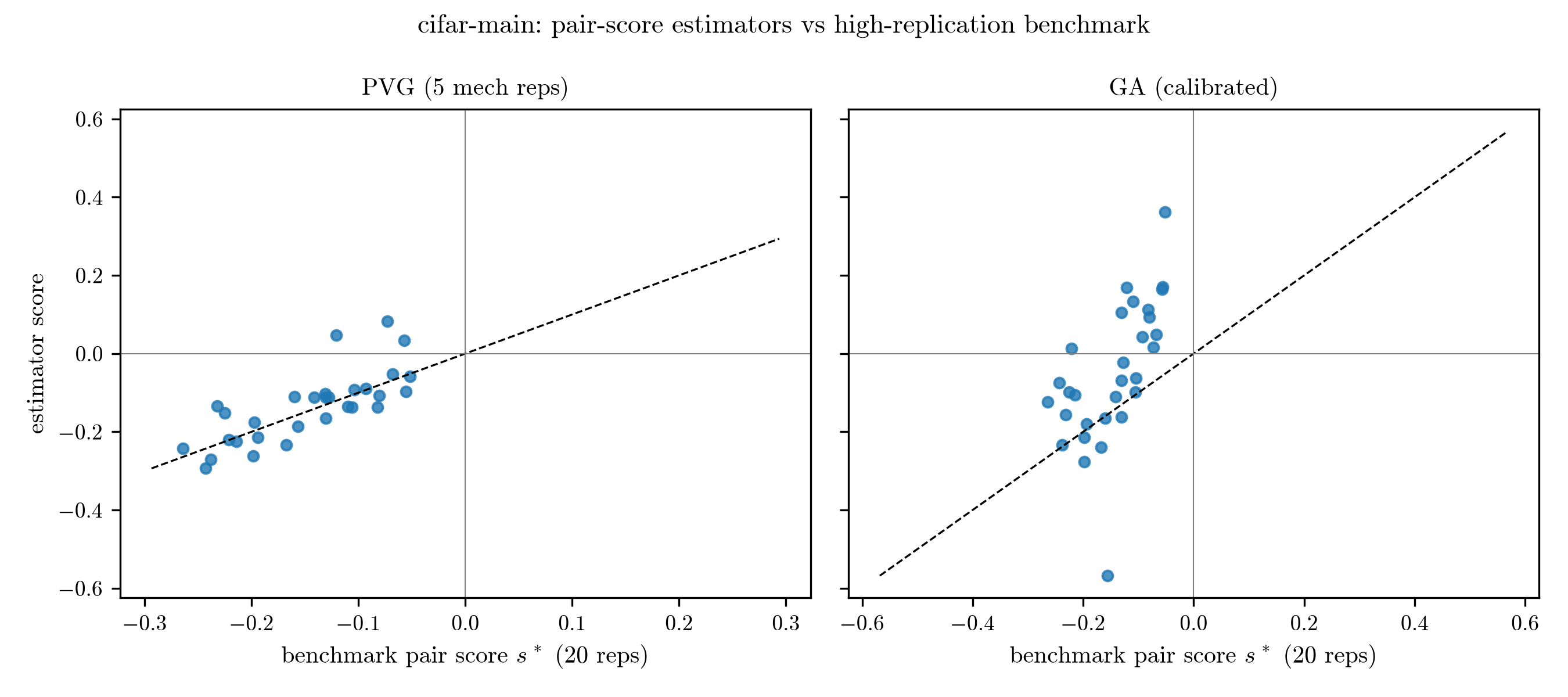}
    \caption{Mechanism-facing pair-score estimators against the disjoint 20-repetition PVG benchmark. Horizontal and vertical zero lines expose sign errors, which directly affect destination acceptance.}
    \label{fig:cifar-estimators}
\end{figure}

The bootstrap qualifies the exact estimated-table claims. At the benign cell, the full-run optimum is reselected in $69\%$ to $100\%$ of feasible resamples, depending on the seed. The 95th percentile of the empirical price of stability is at most $1.0224$, and the best stable welfare exceeds grand-coalition welfare in every feasible resample for every seed. The headline conclusion, stable welfare near the optimum and well above the grand coalition, is therefore robust to the measured value-table uncertainty. At the lean cell, the optimum is reselected in only $29.5\%$ to $55\%$ of resamples. We accordingly interpret the lean-regime partition identities as estimation-sensitive, even though the full-run enumeration is exact for its estimated table. \Cref{fig:cifar-bootstrap-pos} shows the contrast per seed: benign-cell quantile intervals collapse onto one, while lean-cell intervals are wide and the optimum identity is unstable under resampling.

\begin{figure}[t]
    \centering
    \includegraphics[width=\linewidth,trim=0 0 0 0.30in,clip]{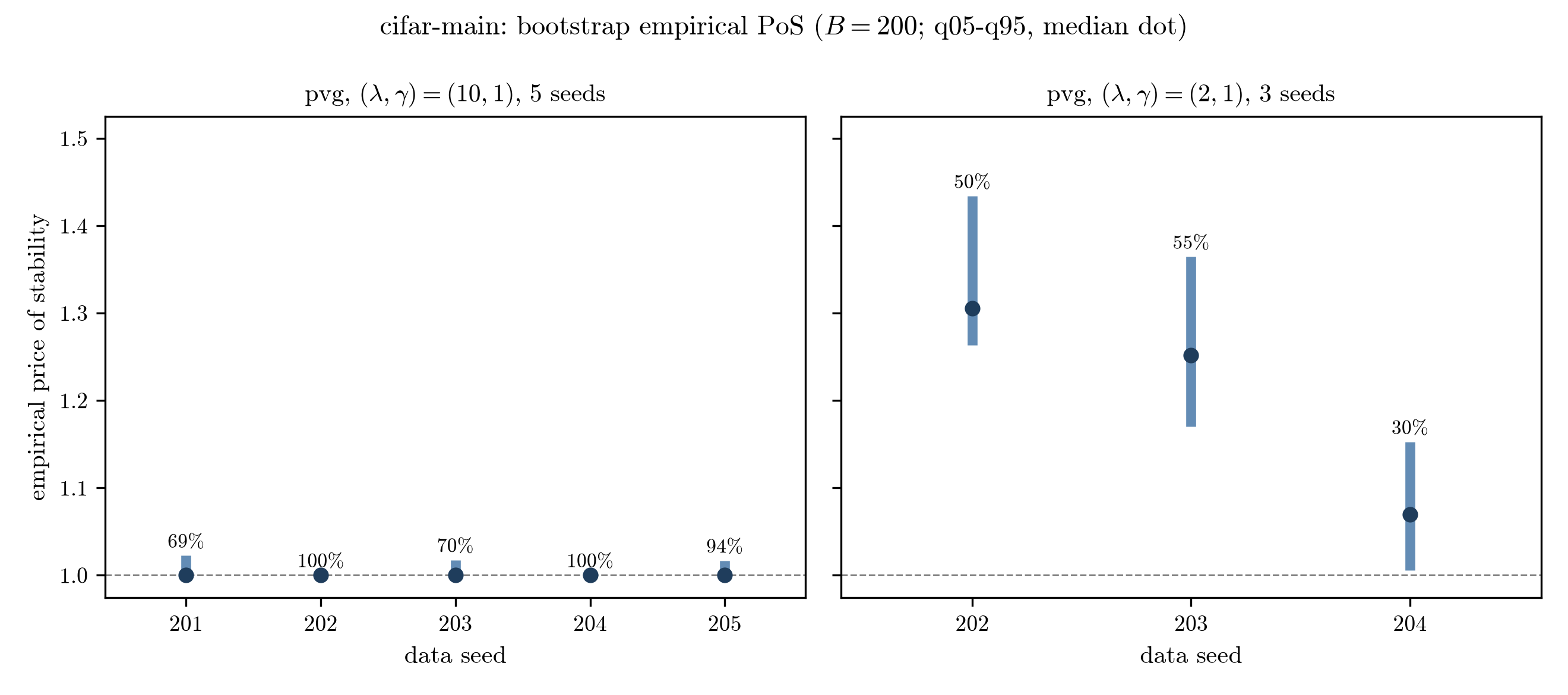}
    \caption{Bootstrap uncertainty of the empirical price of stability at the two co-primary cells ($B=200$ joint resamples of coalition quality and member accuracies; bars span the 5th to 95th percentile over feasible resamples, dots mark medians, labels give the rate at which the full-run optimum is reselected). Seeds 201 and 205 are design-infeasible at the lean cell and carry no lean bootstrap.}
    \label{fig:cifar-bootstrap-pos}
\end{figure}

\subsection{Welfare-bound audit, budget sensitivity, and runtime}

Across all design-feasible cells and all five seeds, the additive slack floor, the relative-slack floor whenever $\delta<1$, and the approximation-plus-stabilization floor in \Cref{thm:approximate-stable-welfare} hold for every audited endpoint on both the mechanism and benchmark value tables. \Cref{fig:cifar-pos-delta} plots the realized price of stability against the realized relative slack for all 92 design-feasible cells: every point with $\delta<1$ lies well below the bound $1/(1-\delta)$ of \Cref{cor:relative-slack}, and the points beyond $\delta=1$, where the bound is vacuous, still realize prices of stability at most $1.291$. The realized agreement ratio spans $[0,1]$, and the greedy agreement initializer finds the exact potential optimum for all five primary instances. This realized ratio is measurable because exhaustive enumeration certifies the potential optimum at $n=4$. Beyond certification scale, an a-priori application of \Cref{thm:approximate-stable-welfare} instead requires an initialization algorithm with a proven agreement-approximation factor. The mean negative-edge mass at the benign cell is $C_v=8.46$, so the negative-edge correction in the end-to-end bound is materially nonzero on these heterogeneous instances.

\begin{figure}[t]
    \centering
    \includegraphics[width=\linewidth,trim=0 0 0 0.325in,clip]{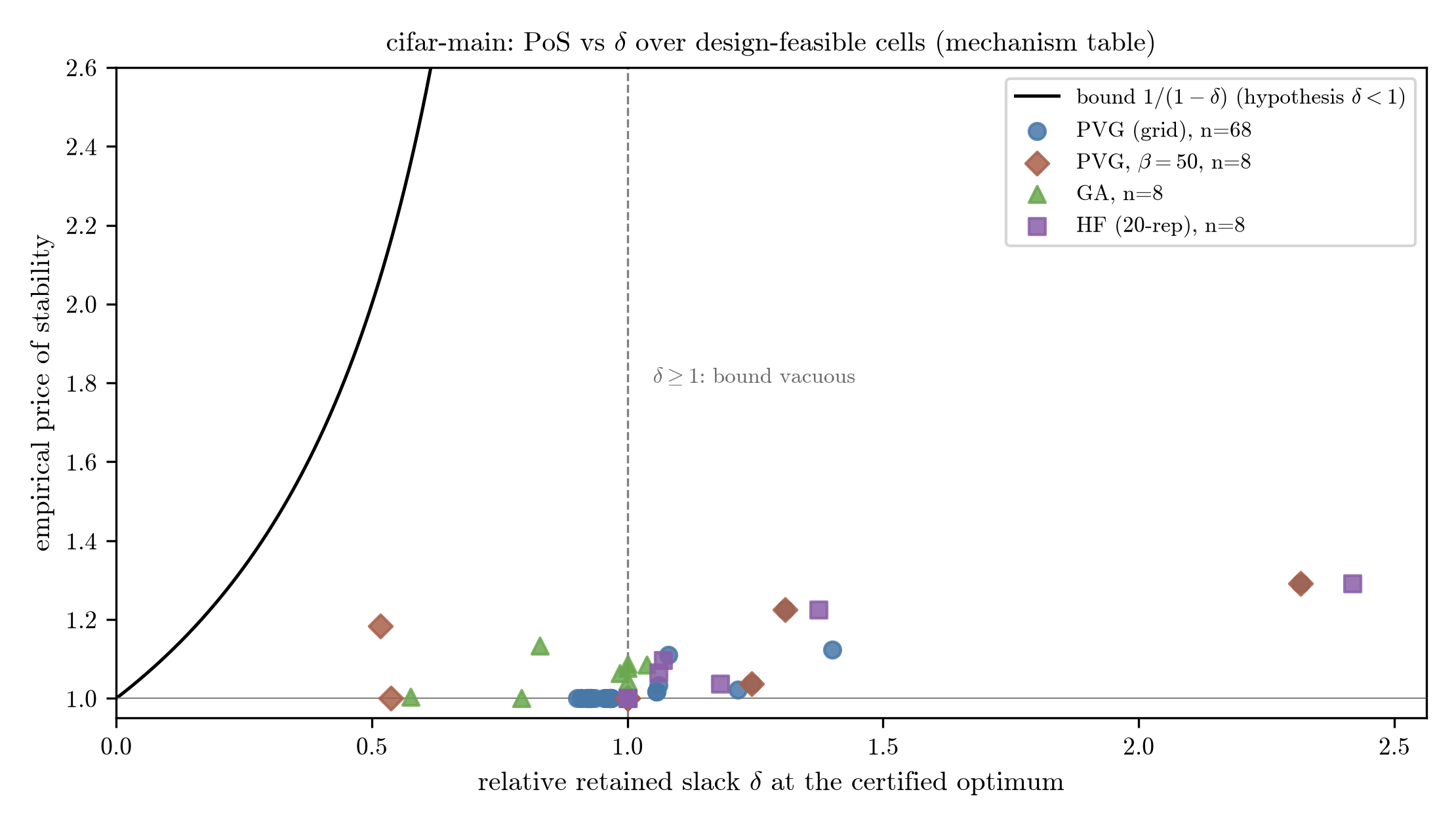}
    \caption{Empirical price of stability against relative retained slack $\delta$ at the certified optimum, for all 92 design-feasible cells on the mechanism value table (all arms and seeds). The curve is the relative-slack bound $1/(1-\delta)$ of \Cref{cor:relative-slack}, valid on its hypothesis region $\delta<1$; points beyond the dashed line at $\delta=1$ carry no relative-slack guarantee. HF denotes the cells in which the 20-repetition benchmark table itself plays the mechanism role.}
    \label{fig:cifar-pos-delta}
\end{figure}

Budget checks on the independent benchmark table identify three estimation-induced violations among 92 design-feasible cells. Two occur in the $\beta=50$ arm for seeds 201 and 205, where the mechanism-table LP drives the productive pair's retained budget to the boundary; benchmark slack is then $-1.04$ and $-1.00$. The third is a smaller $-0.057$ violation for GA at the lean cell of seed 202. The primary $\beta=1$ PVG mechanism has no benchmark-table budget violation. These cases do not invalidate the mechanism-table guarantees, but they show that exact-balance or nearly binding transfer designs are sensitive to coalition-value estimation error.

Coalition training averages 14.6--22.4 seconds per run across seeds on deterministic CPU. The measured sequential-equivalent coalition-training total is 91.3--139.9 minutes per seed, with a mean of 117.6 minutes; operating-system throttling inflated wall-clock time on two seeds without affecting deterministic outputs. The economic stage, including 21 design cells and two 200-resample bootstraps, takes about three seconds per seed. Peak resident memory is approximately 1.1 GiB per worker. All 92 design-feasible processed cells satisfy the algebraic identities to a maximum relative residual of $1.8\times10^{-15}$ (the 13 screened cells carry no enumeration to audit), and all 1{,}472 tested dynamics endpoints pass their stated stability checks.

\subsection{Scope and limitations}

The four-participant scale is a deliberate tradeoff in favor of exact estimated-table certification. Evaluating coalition values requires training all $2^n-1$ nonempty coalitions, while exact welfare and stability certification requires searching a number of partitions that grows according to the Bell numbers. At $n=4$, both inventories contain 15 objects, and every coalition can be evaluated under 25 independent repetitions within the preregistered deterministic-CPU compute envelope. The resulting claims are exact for each estimated table, but they do not establish behavior at larger population sizes. The study is also limited to CIFAR-10 image classification, one primary non-IID concentration of $0.3$, five analysis seeds, and IFCA as the only learned-clustering baseline. Wider claims require additional tasks and modalities, more heterogeneity regimes and baselines, and scalable value-estimation and optimization methods that do not enumerate every coalition and partition.

\section{Scope of the Mechanism-Design Claim}
The current model assumes that the coordinator knows or estimates $B(S)$, $C_0(S)$, and $d_i(S)$. It designs transfers but does not yet elicit private types. Accordingly, the accurate description is \emph{incentive allocation and coalition formation}, not a truthful direct mechanism. If costs, reliability, data quality, or privacy preferences are privately reported, a stronger journal extension must define the message space and prove incentive compatibility in addition to individual rationality and budget feasibility.

\section{Conclusion}
This journal framework makes coalition membership consequential for federated learning, separates economic primitives from transfers, and provides a correct stability foundation. Symmetric pairwise allocations yield an exact potential game, ensuring existence and finite better-response convergence for Nash stability and destination-consent-based individual stability. Bounded pair incentives admit an exact feasibility characterization, and submodular retained slack permits polynomial oracle-time verification in an exact tractability regime. Coordinator-retained budget slack at the welfare optimum controls the additive gap to a stable potential optimum, while relative slack gives a multiplicative price-of-stability bound that is asymptotically tight. Exact balance gives welfare-optimal stability only on the exactly pairwise-representable surplus class; without alignment, welfare loss is unbounded. Global potential optimization is weighted maximum-agreement correlation clustering. Approximation followed by stability post-processing preserves agreement value and satisfies an end-to-end welfare bound whose losses are governed by retained slack and negative-edge mass; the negative-edge correction is attained in a four-participant construction. The CIFAR-10 study shows that the pairwise mechanism can attain the certified estimated-table optimum with rapid decentralized convergence in a benign regime, while lean calibrations reveal genuine stability costs and nearly balanced transfers reveal sensitivity to value-estimation error. It also confirms that pair-sign estimation, retained slack, and negative-edge mass are operational diagnostics rather than proof artifacts. Remaining work concerns larger coalition populations, richer learning tasks, private-type elicitation, FL-specific alignment guarantees, and convergence complexity.

\backmatter

\section*{Declarations}

\noindent\textbf{Funding.}\quad
No funding was received for conducting this study.

\noindent\textbf{Competing interests.}\quad
The author has no competing interests to declare that are relevant to the
content of this article.

\noindent\textbf{Ethics approval.}\quad Not applicable. This study involves no
human participants, human data, or animals; all experiments use public
benchmark datasets (CIFAR-10).

\noindent\textbf{Consent to participate.}\quad Not applicable.

\noindent\textbf{Consent for publication.}\quad Not applicable.

\noindent\textbf{Data availability.}\quad
This study uses the publicly available CIFAR-10 dataset. The processed results
and the sealed raw experimental records are included in the reproducibility
package, which is openly available in a public GitHub repository,
\url{https://github.com/cengishasan/clustered-fl-coalition-formation}, and archived under a persistent DOI,
\url{https://doi.org/10.5281/zenodo.21428635}.

\noindent\textbf{Code availability.}\quad
The source code, configurations, preregistration contract, and a verification
script that reproduces all reported results from the sealed raw records are
available in the same GitHub repository and archived under the same DOI.

\noindent\textbf{Author contributions.}\quad
C.H. is the sole author and conducted all aspects of the work, including the
theory, experiments, and writing.

\bibliography{references}


\begin{thebibliography}{35}
\ifx \bisbn   \undefined \def \bisbn  #1{ISBN #1}\fi
\ifx \binits  \undefined \def \binits#1{#1}\fi
\ifx \bauthor  \undefined \def \bauthor#1{#1}\fi
\ifx \batitle  \undefined \def \batitle#1{#1}\fi
\ifx \bjtitle  \undefined \def \bjtitle#1{#1}\fi
\ifx \bvolume  \undefined \def \bvolume#1{\textbf{#1}}\fi
\ifx \byear  \undefined \def \byear#1{#1}\fi
\ifx \bissue  \undefined \def \bissue#1{#1}\fi
\ifx \bfpage  \undefined \def \bfpage#1{#1}\fi
\ifx \blpage  \undefined \def \blpage #1{#1}\fi
\ifx \burl  \undefined \def \burl#1{\textsf{#1}}\fi
\ifx \doiurl  \undefined \def \doiurl#1{\url{https://doi.org/#1}}\fi
\ifx \betal  \undefined \def \betal{\textit{et al.}}\fi
\ifx \binstitute  \undefined \def \binstitute#1{#1}\fi
\ifx \binstitutionaled  \undefined \def \binstitutionaled#1{#1}\fi
\ifx \bctitle  \undefined \def \bctitle#1{#1}\fi
\ifx \beditor  \undefined \def \beditor#1{#1}\fi
\ifx \bpublisher  \undefined \def \bpublisher#1{#1}\fi
\ifx \bbtitle  \undefined \def \bbtitle#1{#1}\fi
\ifx \bedition  \undefined \def \bedition#1{#1}\fi
\ifx \bseriesno  \undefined \def \bseriesno#1{#1}\fi
\ifx \blocation  \undefined \def \blocation#1{#1}\fi
\ifx \bsertitle  \undefined \def \bsertitle#1{#1}\fi
\ifx \bsnm \undefined \def \bsnm#1{#1}\fi
\ifx \bsuffix \undefined \def \bsuffix#1{#1}\fi
\ifx \bparticle \undefined \def \bparticle#1{#1}\fi
\ifx \barticle \undefined \def \barticle#1{#1}\fi
\bibcommenthead
\ifx \bconfdate \undefined \def \bconfdate #1{#1}\fi
\ifx \botherref \undefined \def \botherref #1{#1}\fi
\ifx \url \undefined \def \url#1{\textsf{#1}}\fi
\ifx \bchapter \undefined \def \bchapter#1{#1}\fi
\ifx \bbook \undefined \def \bbook#1{#1}\fi
\ifx \bcomment \undefined \def \bcomment#1{#1}\fi
\ifx \oauthor \undefined \def \oauthor#1{#1}\fi
\ifx \citeauthoryear \undefined \def \citeauthoryear#1{#1}\fi
\ifx \endbibitem  \undefined \def \endbibitem {}\fi
\ifx \bconflocation  \undefined \def \bconflocation#1{#1}\fi
\ifx \arxivurl  \undefined \def \arxivurl#1{\textsf{#1}}\fi
\csname PreBibitemsHook\endcsname

\bibitem[\protect\citeauthoryear{McMahan et~al.}{2017}]{McMahanEtAl2017FedAvg}
\begin{bchapter}
\bauthor{\bsnm{McMahan}, \binits{H.B.}},
\bauthor{\bsnm{Moore}, \binits{E.}},
\bauthor{\bsnm{Ramage}, \binits{D.}},
\bauthor{\bsnm{Hampson}, \binits{S.}},
\bauthor{\bsnm{Arcas}, \binits{B.A.}}:
\bctitle{Communication-efficient learning of deep networks from decentralized
  data}.
In: \bbtitle{Proceedings of the 20th International Conference on Artificial
  Intelligence and Statistics},
pp. \bfpage{1273}--\blpage{1282}
(\byear{2017})
\end{bchapter}
\endbibitem

\bibitem[\protect\citeauthoryear{Li et~al.}{2020}]{LiEtAl2020Survey}
\begin{barticle}
\bauthor{\bsnm{Li}, \binits{T.}},
\bauthor{\bsnm{Sahu}, \binits{A.K.}},
\bauthor{\bsnm{Talwalkar}, \binits{A.}},
\bauthor{\bsnm{Smith}, \binits{V.}}:
\batitle{Federated learning: Challenges, methods, and future directions}.
\bjtitle{IEEE Signal Processing Magazine}
\bvolume{37}(\bissue{3}),
\bfpage{50}--\blpage{60}
(\byear{2020})
\doiurl{10.1109/MSP.2020.2975749}
\end{barticle}
\endbibitem

\bibitem[\protect\citeauthoryear{Kairouz
  et~al.}{2021}]{KairouzEtAl2021OpenProblems}
\begin{barticle}
\bauthor{\bsnm{Kairouz}, \binits{P.}},
\bauthor{\bsnm{McMahan}, \binits{H.B.}}, \betal:
\batitle{Advances and open problems in federated learning}.
\bjtitle{Foundations and Trends in Machine Learning}
\bvolume{14}(\bissue{1--2}),
\bfpage{1}--\blpage{210}
(\byear{2021})
\doiurl{10.1561/2200000083}
\end{barticle}
\endbibitem

\bibitem[\protect\citeauthoryear{Bogomolnaia and
  Jackson}{2002}]{BogomolnaiaJackson2002}
\begin{barticle}
\bauthor{\bsnm{Bogomolnaia}, \binits{A.}},
\bauthor{\bsnm{Jackson}, \binits{M.O.}}:
\batitle{The stability of hedonic coalition structures}.
\bjtitle{Games and Economic Behavior}
\bvolume{38}(\bissue{2}),
\bfpage{201}--\blpage{230}
(\byear{2002})
\doiurl{10.1006/game.2001.0877}
\end{barticle}
\endbibitem

\bibitem[\protect\citeauthoryear{Hajdukov{\'a}}{2006}]{Hajdukova2006}
\begin{barticle}
\bauthor{\bsnm{Hajdukov{\'a}}, \binits{J.}}:
\batitle{Coalition formation games: A survey}.
\bjtitle{International Game Theory Review}
\bvolume{8}(\bissue{4}),
\bfpage{613}--\blpage{641}
(\byear{2006})
\doiurl{10.1142/S0219198906001144}
\end{barticle}
\endbibitem

\bibitem[\protect\citeauthoryear{Aziz and Brandl}{2012}]{AzizBrandl2012}
\begin{bchapter}
\bauthor{\bsnm{Aziz}, \binits{H.}},
\bauthor{\bsnm{Brandl}, \binits{F.}}:
\bctitle{Existence of stability in hedonic coalition formation games}.
In: \bbtitle{Proceedings of the 11th International Conference on Autonomous
  Agents and Multiagent Systems}
(\byear{2012})
\end{bchapter}
\endbibitem

\bibitem[\protect\citeauthoryear{Hasan}{2021}]{Hasan2021HedonicFL}
\begin{botherref}
\oauthor{\bsnm{Hasan}, \binits{C.}}:
Incentive Mechanism Design for Federated Learning: Hedonic Game Approach.
arXiv:2101.09673
(2021)
\end{botherref}
\endbibitem

\bibitem[\protect\citeauthoryear{Le et~al.}{2021}]{LeEtAl2021Auction}
\begin{barticle}
\bauthor{\bsnm{Le}, \binits{T.H.T.}},
\bauthor{\bsnm{Tran}, \binits{N.H.}},
\bauthor{\bsnm{Tun}, \binits{Y.K.}},
\bauthor{\bsnm{Nguyen}, \binits{M.N.H.}},
\bauthor{\bsnm{Pandey}, \binits{S.R.}},
\bauthor{\bsnm{Han}, \binits{Z.}},
\bauthor{\bsnm{Hong}, \binits{C.S.}}:
\batitle{An incentive mechanism for federated learning in wireless cellular
  networks: An auction approach}.
\bjtitle{IEEE Transactions on Wireless Communications}
\bvolume{20}(\bissue{8}),
\bfpage{4874}--\blpage{4887}
(\byear{2021})
\doiurl{10.1109/TWC.2021.3062708}
\end{barticle}
\endbibitem

\bibitem[\protect\citeauthoryear{Ding et~al.}{2020}]{DingEtAl2020PrivateInfo}
\begin{bchapter}
\bauthor{\bsnm{Ding}, \binits{N.}},
\bauthor{\bsnm{Fang}, \binits{Z.}},
\bauthor{\bsnm{Huang}, \binits{J.}}:
\bctitle{Incentive mechanism design for federated learning with
  multi-dimensional private information}.
In: \bbtitle{2020 18th International Symposium on Modeling and Optimization in
  Mobile, Ad Hoc, and Wireless Networks},
pp. \bfpage{1}--\blpage{8}
(\byear{2020})
\end{bchapter}
\endbibitem

\bibitem[\protect\citeauthoryear{Sarikaya and
  Ercetin}{2020}]{SarikayaErcetin2020}
\begin{barticle}
\bauthor{\bsnm{Sarikaya}, \binits{Y.}},
\bauthor{\bsnm{Ercetin}, \binits{O.}}:
\batitle{Motivating workers in federated learning: A stackelberg game
  perspective}.
\bjtitle{IEEE Networking Letters}
\bvolume{2}(\bissue{1}),
\bfpage{23}--\blpage{27}
(\byear{2020})
\doiurl{10.1109/LNET.2019.2947144}
\end{barticle}
\endbibitem

\bibitem[\protect\citeauthoryear{Zhan and Zhang}{2020}]{ZhanZhang2020DRL}
\begin{bchapter}
\bauthor{\bsnm{Zhan}, \binits{Y.}},
\bauthor{\bsnm{Zhang}, \binits{J.}}:
\bctitle{An incentive mechanism design for efficient edge learning by deep
  reinforcement learning approach}.
In: \bbtitle{IEEE INFOCOM 2020 - IEEE Conference on Computer Communications},
pp. \bfpage{2489}--\blpage{2498}
(\byear{2020}).
\doiurl{10.1109/INFOCOM41043.2020.9155268}
\end{bchapter}
\endbibitem

\bibitem[\protect\citeauthoryear{Kim}{2020}]{Kim2020IncentiveDP}
\begin{barticle}
\bauthor{\bsnm{Kim}, \binits{S.}}:
\batitle{Incentive design and differential privacy based federated learning: A
  mechanism design perspective}.
\bjtitle{IEEE Access}
\bvolume{8},
\bfpage{187317}--\blpage{187325}
(\byear{2020})
\doiurl{10.1109/ACCESS.2020.3030888}
\end{barticle}
\endbibitem

\bibitem[\protect\citeauthoryear{Tu et~al.}{2022}]{TuEtAl2022Incentive}
\begin{barticle}
\bauthor{\bsnm{Tu}, \binits{X.}},
\bauthor{\bsnm{Zhu}, \binits{K.}},
\bauthor{\bsnm{Luong}, \binits{N.C.}},
\bauthor{\bsnm{Niyato}, \binits{D.}},
\bauthor{\bsnm{Zhang}, \binits{Y.}},
\bauthor{\bsnm{Li}, \binits{J.}}:
\batitle{Incentive mechanisms for federated learning: From economic and game
  theoretic perspective}.
\bjtitle{IEEE Transactions on Cognitive Communications and Networking}
\bvolume{8}(\bissue{3}),
\bfpage{1566}--\blpage{1593}
(\byear{2022})
\doiurl{10.1109/TCCN.2022.3177522}
\end{barticle}
\endbibitem

\bibitem[\protect\citeauthoryear{Wu et~al.}{2025}]{WuEtAl2025Diversity}
\begin{barticle}
\bauthor{\bsnm{Wu}, \binits{X.}},
\bauthor{\bsnm{Lin}, \binits{Y.}},
\bauthor{\bsnm{Zhong}, \binits{H.}},
\bauthor{\bsnm{Tao}, \binits{J.}},
\bauthor{\bsnm{Gu}, \binits{Y.}},
\bauthor{\bsnm{Shen}, \binits{S.}},
\bauthor{\bsnm{Yu}, \binits{S.}}:
\batitle{A diversity-aware incentive mechanism for cross-silo federated
  learning with budget constraint}.
\bjtitle{Knowledge-Based Systems}
\bvolume{315},
\bfpage{113212}
(\byear{2025})
\doiurl{10.1016/j.knosys.2025.113212}
\end{barticle}
\endbibitem

\bibitem[\protect\citeauthoryear{Donahue and
  Kleinberg}{2021}]{DonahueKleinberg2021ModelSharing}
\begin{barticle}
\bauthor{\bsnm{Donahue}, \binits{K.}},
\bauthor{\bsnm{Kleinberg}, \binits{J.}}:
\batitle{Model-sharing games: Analyzing federated learning under voluntary
  participation}.
\bjtitle{Proceedings of the AAAI Conference on Artificial Intelligence}
\bvolume{35}(\bissue{6}),
\bfpage{5303}--\blpage{5311}
(\byear{2021})
\doiurl{10.1609/aaai.v35i6.16669}
\end{barticle}
\endbibitem

\bibitem[\protect\citeauthoryear{Blum et~al.}{2021}]{BlumEtAl2021Collaboration}
\begin{bchapter}
\bauthor{\bsnm{Blum}, \binits{A.}},
\bauthor{\bsnm{Haghtalab}, \binits{N.}},
\bauthor{\bsnm{Phillips}, \binits{R.L.}},
\bauthor{\bsnm{Shao}, \binits{H.}}:
\bctitle{One for one, or all for all: Equilibria and optimality of
  collaboration in federated learning}.
In: \bbtitle{Proceedings of the 38th International Conference on Machine
  Learning},
pp. \bfpage{1005}--\blpage{1014}
(\byear{2021})
\end{bchapter}
\endbibitem

\bibitem[\protect\citeauthoryear{Chaudhury
  et~al.}{2022}]{ChaudhuryEtAl2022Core}
\begin{bchapter}
\bauthor{\bsnm{Chaudhury}, \binits{B.R.}},
\bauthor{\bsnm{Li}, \binits{L.}},
\bauthor{\bsnm{Kang}, \binits{M.}},
\bauthor{\bsnm{Li}, \binits{B.}},
\bauthor{\bsnm{Mehta}, \binits{R.}}:
\bctitle{Fairness in federated learning via core-stability}.
In: \bbtitle{Advances in Neural Information Processing Systems},
vol. \bseriesno{35}
(\byear{2022}).
\burl{https://proceedings.neurips.cc/paper_files/paper/2022/hash/25e92e33ac8c35fd49f394c37f21b6da-Abstract-Conference.html}
\end{bchapter}
\endbibitem

\bibitem[\protect\citeauthoryear{Chen et~al.}{2024}]{ChenEtAl2024FedEgoists}
\begin{bchapter}
\bauthor{\bsnm{Chen}, \binits{M.}},
\bauthor{\bsnm{Wu}, \binits{X.}},
\bauthor{\bsnm{Tang}, \binits{X.}},
\bauthor{\bsnm{He}, \binits{T.}},
\bauthor{\bsnm{Ong}, \binits{Y.-S.}},
\bauthor{\bsnm{Liu}, \binits{Q.}},
\bauthor{\bsnm{Lao}, \binits{Q.}},
\bauthor{\bsnm{Yu}, \binits{H.}}:
\bctitle{Free-rider and conflict aware collaboration formation for cross-silo
  federated learning}.
In: \bbtitle{Advances in Neural Information Processing Systems},
vol. \bseriesno{37}
(\byear{2024}).
\burl{https://proceedings.neurips.cc/paper_files/paper/2024/hash/62ffefbe8dfc8548d22564f3c1d21488-Abstract-Conference.html}
\end{bchapter}
\endbibitem

\bibitem[\protect\citeauthoryear{Singer}{2010}]{Singer2010BudgetFeasible}
\begin{bchapter}
\bauthor{\bsnm{Singer}, \binits{Y.}}:
\bctitle{Budget feasible mechanisms}.
In: \bbtitle{2010 IEEE 51st Annual Symposium on Foundations of Computer
  Science},
pp. \bfpage{765}--\blpage{774}
(\byear{2010})
\end{bchapter}
\endbibitem

\bibitem[\protect\citeauthoryear{Ghosh et~al.}{2020}]{GhoshEtAl2020IFCA}
\begin{bchapter}
\bauthor{\bsnm{Ghosh}, \binits{A.}},
\bauthor{\bsnm{Chung}, \binits{J.}},
\bauthor{\bsnm{Yin}, \binits{D.}},
\bauthor{\bsnm{Ramchandran}, \binits{K.}}:
\bctitle{An efficient framework for clustered federated learning}.
In: \bbtitle{Advances in Neural Information Processing Systems},
vol. \bseriesno{33}
(\byear{2020}).
\burl{https://proceedings.neurips.cc/paper/2020/hash/e32cc80bf07915058ce90722ee17bb71-Abstract.html}
\end{bchapter}
\endbibitem

\bibitem[\protect\citeauthoryear{Sattler et~al.}{2021}]{SattlerEtAl2021CFL}
\begin{barticle}
\bauthor{\bsnm{Sattler}, \binits{F.}},
\bauthor{\bsnm{M{\"u}ller}, \binits{K.-R.}},
\bauthor{\bsnm{Samek}, \binits{W.}}:
\batitle{Clustered federated learning: Model-agnostic distributed multitask
  optimization under privacy constraints}.
\bjtitle{IEEE Transactions on Neural Networks and Learning Systems}
\bvolume{32}(\bissue{8}),
\bfpage{3710}--\blpage{3722}
(\byear{2021})
\doiurl{10.1109/TNNLS.2020.3015958}
\end{barticle}
\endbibitem

\bibitem[\protect\citeauthoryear{Aziz and
  Savani}{2016}]{AzizSavani2016HedonicChapter}
\begin{bchapter}
\bauthor{\bsnm{Aziz}, \binits{H.}},
\bauthor{\bsnm{Savani}, \binits{R.}}:
\bctitle{Hedonic games}.
In: \beditor{\bsnm{Brandt}, \binits{F.}},
\beditor{\bsnm{Conitzer}, \binits{V.}},
\beditor{\bsnm{Endriss}, \binits{U.}},
\beditor{\bsnm{Lang}, \binits{J.}},
\beditor{\bsnm{Procaccia}, \binits{A.D.}} (eds.)
\bbtitle{Handbook of Computational Social Choice}.
\bpublisher{Cambridge University Press}, \blocation{???}
(\byear{2016}).
\bcomment{Chap. 15}
\end{bchapter}
\endbibitem

\bibitem[\protect\citeauthoryear{Brandt
  et~al.}{2023}]{BrandtBullingerWilczynski2023}
\begin{barticle}
\bauthor{\bsnm{Brandt}, \binits{F.}},
\bauthor{\bsnm{Bullinger}, \binits{M.}},
\bauthor{\bsnm{Wilczynski}, \binits{A.}}:
\batitle{Reaching individually stable coalition structures}.
\bjtitle{ACM Transactions on Economics and Computation}
\bvolume{11}(\bissue{1--2}),
\bfpage{4}--\blpage{1465}
(\byear{2023})
\doiurl{10.1145/3588753}
\end{barticle}
\endbibitem

\bibitem[\protect\citeauthoryear{Bil{\`o} et~al.}{2018}]{BiloEtAl2018FHG}
\begin{barticle}
\bauthor{\bsnm{Bil{\`o}}, \binits{V.}},
\bauthor{\bsnm{Fanelli}, \binits{A.}},
\bauthor{\bsnm{Flammini}, \binits{M.}},
\bauthor{\bsnm{Monaco}, \binits{G.}},
\bauthor{\bsnm{Moscardelli}, \binits{L.}}:
\batitle{Nash stable outcomes in fractional hedonic games: Existence,
  efficiency and computation}.
\bjtitle{Journal of Artificial Intelligence Research}
\bvolume{62},
\bfpage{315}--\blpage{371}
(\byear{2018})
\doiurl{10.1613/jair.1.11211}
\end{barticle}
\endbibitem

\bibitem[\protect\citeauthoryear{Donahue and
  Kleinberg}{2021}]{DonahueKleinberg2021Optimality}
\begin{bchapter}
\bauthor{\bsnm{Donahue}, \binits{K.}},
\bauthor{\bsnm{Kleinberg}, \binits{J.M.}}:
\bctitle{Optimality and stability in federated learning: A game-theoretic
  approach}.
In: \bbtitle{Advances in Neural Information Processing Systems},
vol. \bseriesno{34}
(\byear{2021}).
\burl{https://proceedings.neurips.cc/paper/2021/hash/09a5e2a11bea20817477e0b1dfe2cc21-Abstract.html}
\end{bchapter}
\endbibitem

\bibitem[\protect\citeauthoryear{Ng et~al.}{2022}]{NgEtAl2022ReputationHedonic}
\begin{barticle}
\bauthor{\bsnm{Ng}, \binits{J.S.}},
\bauthor{\bsnm{Lim}, \binits{W.Y.B.}},
\bauthor{\bsnm{Xiong}, \binits{Z.}},
\bauthor{\bsnm{Cao}, \binits{X.}},
\bauthor{\bsnm{Jin}, \binits{J.}},
\bauthor{\bsnm{Niyato}, \binits{D.}},
\bauthor{\bsnm{Leung}, \binits{C.}},
\bauthor{\bsnm{Miao}, \binits{C.}}:
\batitle{Reputation-aware hedonic coalition formation for efficient serverless
  hierarchical federated learning}.
\bjtitle{IEEE Transactions on Parallel and Distributed Systems}
\bvolume{33}(\bissue{11}),
\bfpage{2675}--\blpage{2686}
(\byear{2022})
\end{barticle}
\endbibitem

\bibitem[\protect\citeauthoryear{Chen et~al.}{2025}]{ChenEtAl2025DualGFL}
\begin{barticle}
\bauthor{\bsnm{Chen}, \binits{X.}},
\bauthor{\bsnm{Zhou}, \binits{X.}},
\bauthor{\bsnm{Zhang}, \binits{S.}},
\bauthor{\bsnm{Sun}, \binits{M.}}:
\batitle{{DualGFL}: Federated learning with a dual-level coalition-auction
  game}.
\bjtitle{Proceedings of the AAAI Conference on Artificial Intelligence}
\bvolume{39}(\bissue{15}),
\bfpage{15904}--\blpage{15912}
(\byear{2025})
\doiurl{10.1609/aaai.v39i15.33746}
\end{barticle}
\endbibitem

\bibitem[\protect\citeauthoryear{Liu et~al.}{2025}]{LiuEtAl2025AAV}
\begin{barticle}
\bauthor{\bsnm{Liu}, \binits{J.}},
\bauthor{\bsnm{Li}, \binits{X.}},
\bauthor{\bsnm{Xu}, \binits{Y.}},
\bauthor{\bsnm{Lyu}, \binits{C.}},
\bauthor{\bsnm{Wang}, \binits{Y.}},
\bauthor{\bsnm{Liu}, \binits{X.}}:
\batitle{Hedonic coalition formation game and contract-based federated learning
  in {AAV}-assisted internet of things}.
\bjtitle{IEEE Internet of Things Journal}
\bvolume{12}(\bissue{9}),
\bfpage{11258}--\blpage{11272}
(\byear{2025})
\doiurl{10.1109/JIOT.2025.3543917}
\end{barticle}
\endbibitem

\bibitem[\protect\citeauthoryear{Bansal et~al.}{2004}]{BansalBlumChawla2004}
\begin{barticle}
\bauthor{\bsnm{Bansal}, \binits{N.}},
\bauthor{\bsnm{Blum}, \binits{A.}},
\bauthor{\bsnm{Chawla}, \binits{S.}}:
\batitle{Correlation clustering}.
\bjtitle{Machine Learning}
\bvolume{56}(\bissue{1--3}),
\bfpage{89}--\blpage{113}
(\byear{2004})
\doiurl{10.1023/B:MACH.0000033116.57574.95}
\end{barticle}
\endbibitem

\bibitem[\protect\citeauthoryear{Swamy}{2004}]{Swamy2004}
\begin{bchapter}
\bauthor{\bsnm{Swamy}, \binits{C.}}:
\bctitle{Correlation clustering: Maximizing agreements via semidefinite
  programming}.
In: \bbtitle{Proceedings of the Fifteenth Annual ACM-SIAM Symposium on Discrete
  Algorithms},
pp. \bfpage{526}--\blpage{527}
(\byear{2004})
\end{bchapter}
\endbibitem

\bibitem[\protect\citeauthoryear{Dr{\`e}ze and
  Greenberg}{1980}]{DrezeGreenberg1980}
\begin{barticle}
\bauthor{\bsnm{Dr{\`e}ze}, \binits{J.H.}},
\bauthor{\bsnm{Greenberg}, \binits{J.}}:
\batitle{Hedonic coalitions: Optimality and stability}.
\bjtitle{Econometrica}
\bvolume{48}(\bissue{4}),
\bfpage{987}--\blpage{1003}
(\byear{1980})
\doiurl{10.2307/1912943}
\end{barticle}
\endbibitem

\bibitem[\protect\citeauthoryear{Monderer and
  Shapley}{1996}]{MondererShapley1996}
\begin{barticle}
\bauthor{\bsnm{Monderer}, \binits{D.}},
\bauthor{\bsnm{Shapley}, \binits{L.S.}}:
\batitle{Potential games}.
\bjtitle{Games and Economic Behavior}
\bvolume{14}(\bissue{1}),
\bfpage{124}--\blpage{143}
(\byear{1996})
\doiurl{10.1006/game.1996.0044}
\end{barticle}
\endbibitem

\bibitem[\protect\citeauthoryear{Anshelevich
  et~al.}{2008}]{AnshelevichEtAl2008PoS}
\begin{barticle}
\bauthor{\bsnm{Anshelevich}, \binits{E.}},
\bauthor{\bsnm{Dasgupta}, \binits{A.}},
\bauthor{\bsnm{Kleinberg}, \binits{J.}},
\bauthor{\bsnm{Tardos}, \binits{{\'E}.}},
\bauthor{\bsnm{Wexler}, \binits{T.}},
\bauthor{\bsnm{Roughgarden}, \binits{T.}}:
\batitle{The price of stability for network design with fair cost allocation}.
\bjtitle{SIAM Journal on Computing}
\bvolume{38}(\bissue{4}),
\bfpage{1602}--\blpage{1623}
(\byear{2008})
\doiurl{10.1137/070680096}
\end{barticle}
\endbibitem

\bibitem[\protect\citeauthoryear{Schrijver}{2000}]{Schrijver2000Submodular}
\begin{barticle}
\bauthor{\bsnm{Schrijver}, \binits{A.}}:
\batitle{A combinatorial algorithm minimizing submodular functions in strongly
  polynomial time}.
\bjtitle{Journal of Combinatorial Theory, Series B}
\bvolume{80}(\bissue{2}),
\bfpage{346}--\blpage{355}
(\byear{2000})
\doiurl{10.1006/jctb.2000.1989}
\end{barticle}
\endbibitem

\bibitem[\protect\citeauthoryear{Aziz et~al.}{2013}]{AzizBrandtSeedig2013}
\begin{barticle}
\bauthor{\bsnm{Aziz}, \binits{H.}},
\bauthor{\bsnm{Brandt}, \binits{F.}},
\bauthor{\bsnm{Seedig}, \binits{H.G.}}:
\batitle{Computing desirable partitions in additively separable hedonic games}.
\bjtitle{Artificial Intelligence}
\bvolume{195},
\bfpage{316}--\blpage{334}
(\byear{2013})
\doiurl{10.1016/j.artint.2012.09.006}
\end{barticle}
\endbibitem

\end{thebibliography}

\end{document}